\newtheorem{theorem}{Theorem}
\newtheorem{definition}[theorem]{Definition}
\newtheorem{proposition}[theorem]{Proposition}
\newenvironment{proof}[1][Proof]{\noindent\textbf{#1.} }{\ \rule{0.5em}{0.5em}}
\def\BE{\begin{equation}}
\def\EE{\end{equation}}
\DeclareSymbolFont{largesymbolsA}{U}{txexa}{m}{n}
\DeclareMathSymbol{\varprod}{\mathop}{largesymbolsA}{16}
\begin{document}

\title{\bf Comparative analysis of the structures and outcomes of geophysical flow models and modeling assumptions using uncertainty quantification}

\author[1,2]{\small Abani Patra}
\author[3,2]{Andrea Bevilacqua}
\author[1]{Ali Akhavan-Safaei}
\author[4]{E. Bruce Pitman}
\author[3]{Marcus Bursik}
\author[3]{David Hyman}

\affil[1]{\small\textit{Department of Mechanical and  Aerospace Engineering, University at Buffalo, Buffalo, NY} }
\affil[2]{\textit{Computational Data Science and Engineering, University at Buffalo, Buffalo, NY}}
\affil[3]{\textit{Department of Earth Sciences, University at Buffalo, Buffalo, NY} }
\affil[4]{\textit{Department of Materials Design and Innovation, University at Buffalo, NY}}

\date{\texttt{\{abani,abevilac,aliakhav,pitman,mib,davidhym\}@buffalo.edu}}

\maketitle

\abstract

We present a new statistically driven method for analyzing the modeling of geophysical flows. Many models have been advocated by different modelers for such flows incorporating different modeling assumptions. Limited and sparse observational data on the modeled phenomena usually does not permit a clean discrimination among models for fitness of purpose, and, heuristic choices are usually made, especially for critical predictions of behavior that has not been experienced. We advocate here a methodology for characterizing models and the modeling assumptions they represent, using a statistical approach over the full range of applicability of the models. Such a characterization may then be used to decide the appropriateness of a model and modeling assumption for use.
We present our method by comparing three different models arising from different rheology assumptions, and the data show unambiguously the performance of the models across a wide range of possible flow regimes. This comparison is facilitated by the recent development of the new release of our TITAN2D mass flow code that allows choice of multiple rheologies The quantitative and probabilistic analysis of contributions from different modeling  assumptions in the models is particularly illustrative of the impact of the assumptions. Knowledge of which assumptions dominate, and, by how much, is illustrated in two different case studies: a small scale inclined plane with a flat runway, and the large scale topography on the SW slope of Volc\'{a}n de Colima (MX). A simple model performance evaluation completes the presentation.

\section{Models and assumptions}

Complex systems with sparse observations like large scale geophysical mass flows are often represented in the literature by many models, e.g., \cite{Kelfoun2011}. It is often difficult if not impossible to decide which of these models are appropriate for a particular analysis. Ready availability of many models as reusable software tools makes it the users burden to select one appropriate for their purpose.
For example, the $\mathrm{4^{\mathrm{th}}}$ release of TITAN2D\footnote{available from vhub.org} offers multiple rheology options in the same code base. The availability of three distinct models for similar phenomena in the same tool provides us the ability to directly compare  outputs and internal variables in all the three models and control for usually difficult to quantify effects like numerical solution procedures, input ranges and computer hardware. Given a particular problem for which predictive analysis is planned, the information generated  could possibly be used to guide model choice and/or processes for integrating information from multiple models \citep{Bongard2007}. However, as we have discovered such comparison requires a more careful understanding of each model and its constituents and a well organized process for such comparison. We begin with precise, albeit limited definitions
of models and their constituents.

\cite{Gilbert91} defines science as a process of constructing predictive conceptual models where these models represent
consistent predictive relations in target systems. A simpler definition of a model that is more appropriate for the context of this study is that:

\begin{quote}{\it A model is a representation of a postulated relationship among inputs and outputs of a system, informed by observation, and based on an hypothesis that best explains the relationship.}\end{quote} The definition captures two of the most important characteristics
\begin{itemize}
\item models depend on a {\it hypothesis}, and,
\item models use the {\it data from observation} to validate and refine the hypothesis.
\end{itemize}
Errors and uncertainty in the data and limitations in the hypothesis (usually a tractable and computable mathematical construct articulating beliefs like proportionality, linearity, etc.) are immediate challenges that must be overcome, to construct useful and credible models.

A model is most useful in predicting the behavior of a system for unobserved inputs, and, interpreting or explaining of the system's behavior. Since models require an hypothesis, it follows  that a model is a formulation of a belief about the data. The immediate consequence of this is that the model may not be reliable about such prediction, since the {\it subjectivity of the belief} can never be completely eliminated \citep{Kennedy2001, Higdon2004}, even when sufficient care is taken to use all the available data and information. Second, the data at hand may not provide enough information about the system to characterize its behavior at the desired prediction. This {\it data inadequacy} is rarely characterized even for verified and validated models. 

The consequence of this lack of knowledge and limited data is the multiplicity of beliefs about the complex system being modeled and a profusion of models based on different modeling assumptions and data use. These competing models lead to much debate among scientists. Principles like ``Occam's razor" and Bayesian statistics \citep{Farrell2015} provide some guidance, but simple robust approaches that allow the testing of models for fitness need to be developed. We present in this paper a simple data driven approach to discriminate among models and the modeling assumptions implicit in each model, given a range of phenomena to be studied. We illustrate the approach by work on geophysical mass flows.

An assumption is a simple concept -- any atomic postulate about relationships among quantities under study. Models are compositions of many such assumptions. The study of models is, thus, implicitly a study of these assumptions and their composability and applicability in a particular context. Sometimes a good model contains a useless assumption that may be removed, sometimes a good assumption could be added to a different model - these are usually subjective choices, not data driven. Moreover, the correct assumptions may change as the system evolves, making {\em model choice} more difficult.

The statistically driven method introduced in this study for analyzing complex models provides extensive and quantitative information. Geophysical flow modeling usually compares simulation to observation, and fits the model parameters using the solution of a regularized inverse problem. Nevertheless, this is not always sufficient to solve forecasting problems, in which the range of possible flows might not be limited to a single type and scale of flow. Our approach is different, and evaluates the statistics of a range of flows, produced by the couple $\left(M, P_M\right)$ - i.e. a model $M$ and a probability distribution for its parameters $P_M$.

New quantitative information can solve classical qualitative problems, either model-model or model-observation comparison. The mean plot represents the average behavior of flows in the considered range, and provides the same type of data that is provided by a single simulation. Moreover, the uncertainty ranges generate additional pieces of information that often highlight the differences between the models.

The rest of the paper will define our approach and illustrate it using three models for large scale mass flows incorporated in our large scale mass flow simulation framework  TITAN2D \citep{Patra2005,Patra2006, Yu2009, Aghakhani2016}.
So far, TITAN2D has been successfully applied to the simulation of different geophysical mass flows with specific characteristics \citep{Sheridan2005, Rupp2006, Norini2008, Charbonnier2009, Procter2010, Sheridan2010, Sulpizio2010, Capra2011}. Several studies involving TITAN2D were also directed towards a statistical study of geophysical flows, focusing on uncertainty quantification \citep{Dalbey2008, Dalbey2009, Stefanescu2012b, Stefanescu2012a}, or on the more efficient production of hazard maps \citep{Bayarri2009, Spiller2014, Bayarri2015, Ogburn2016}.

In particular, we initially provide a traditional type of analysis, summarizing the general features that differ among the model outputs. However, this is performed in a probabilistic framework, oriented to extrapolation and forecast. After that, and more significantly, we describe and compare the general features of the newly introduced \emph{contributing variables} in the models, through the new concepts of dominance factors and expected contributions. This is a type of analysis enabled by our approach, that allows us to evaluate modeling assumptions and their relative importance.

\section{Analysis of modeling assumptions and models }
\subsection{Analysis Process}
Let us define $\left(M(A), P_{M(A)}\right)$, where $A$ is a set of assumptions, $M(A)$ is the model which combines those assumptions, and $P_M$ is a probability distribution in the parameter space of $M$. While the support of $P_M$ can be restricted to a single value by solving an inverse problem for the optimal reconstruction of a particular flow, this is not desirable if we are interested in the general predictive capabilities of the model, where we are interested in the outcomes over a whole range.

Our problem cannot be solved using classical sensitivity analysis (e.g. \cite{Saltelli2010}, \cite{Weirs2012}), which decomposes the variance of model output with respect to the input parameters. Indeed, model assumptions cannot be seen as input parameters, because they are related to the terms in the governing equations. These terms can be seen as random variables depending on the inputs, but they have an unknown probability distribution and are not independent. In the sequel we will define the new concepts of Dominance Factors and Expected Contributions to cope with this problem. We summarize our analysis process in thre steps.
\paragraph{Stage 1: Parameter Ranges} In this study, we assume:
$$P_M\left(p_1,\dots,p_{N_M}\right)\sim \bigotimes_{i=1}^{N_M} Unif(a_{i,M},b_{i,M}),$$
where $N_M$ is the number of parameters of $M$. This is not restrictive, and in case of correlation between the commonly used parameters $(\hat p_j)_{j=1,\hat N_M}$, or non-uniform distributions, we can always define a function $g$ such that $g[(p_i)]=(\hat p_j)$, and the $(p_i)$ are independent and uniformly distributed. In particular, we choose these parameter ranges using information gathered from the literature about the physical meaning of those values, together with a preliminary testing for physical consistency of model outcomes and range of inputs/outcomes of interest. {This step is critical, because if the statistical comparison is dominated by trivial macroscopic differences, it cannot focus on the rheology details.}  In the preparation of hazard analysis, expert elicitation processes can be
used to ensure that the studies correctly account for all anticipated and possible flow regimes.

\paragraph{Stage 2 Simulations and Data Gathering}
For each model $M$, we produce datasets of \emph{model inputs}, \emph{contributing variables} and \emph{model outputs}. The \emph{contributing variables} include quantities in the model evaluation that are ascribable to specific assumptions $A_i$. These are usually not observed as outputs from the model. For example in momentum balances of complex flow calculations these could be values of different source terms, dissipation terms and inertia terms. The \emph{model outputs} include explicit outcomes e.g., for flow calculations these could be flow height, lateral extent, area, velocity, acceleration, and derived quantities such as Froude number $Fr$. In general, we use a Monte Carlo simulation, sampling the model inputs and obtaining a family of graphs plotting the expectation of the contributing variables and model outputs. We also include their 5$^{\mathrm{th}}$ and 95$^{\mathrm{th}}$ percentiles. Our sampling technique of the input variables is based on the Latin Hypercube Sampling idea, and in particular, on the improved space-filling properties of the orthogonal array-based Latin Hypercubes (see Appendix \ref{A-1}). The volume of data generated is likely to be large but modern computing and data handling equipment readily available to most modeling researchers \footnote{We thank the University at Buffalo Center for Computing Research} in university and national research facilities are more than adequate.

\paragraph{Stage 3: Results Analysis} These and other statistics can now be compared to determine the need for different modeling assumptions and the relative merits of different models. Thus, analysis of the data gathered over the entire range of flows for the state variables and outcomes leads to a quantitative basis for accepting or rejecting particular assumptions or models for specific outcomes.

\subsection{Monte Carlo Process and Statistical Analysis}
In our study, the flow range is defined by establishing boundaries for inputs like flow volume or rheology coefficients characterizing the models.  Latin Hypercube Sampling is performed over $[0,1]^d$ where $d$ depends on the number of  input parameters (see Appendix \ref{A-1}). Those dimensionless samples are linearly  mapped to fill the required intervals. Section \ref{lhs_des_colima} provides examples of Latin Hypercube design in the three models that are targets of this study, with respect to their commonly used parameters.

Following the simulations, we generate data for each sample run and each outcome and contributing variables $f(\underline{\textbf x},t)$ calculated as a function of time on the elements of the computational grid. This analysis generates tremendous volume of data which must then be analyzed using statistical methods for summative impact. The contributing variables in this case are the mass and force terms in the conservation laws defined above.

We devise many statistical measures for analyzing the data. For instance, let $(F_i(\underline{\textbf x},t))_{i=1,\dots, 4}$ be an array of force terms, where $\underline{\textbf x}\in \mathbf R^d$ is a spatial location, and $t\in T$ is a time instant. The degree of contribution of those force terms to the flow dynamics can be significantly variable in space and time, and we define the \emph{dominance factors} $[p_j(\underline{\textbf x},t)]_{j=1,\dots, k}$, i.e., the probability of each $F_j(\underline{\textbf x},t)$ to be the dominant force. Those probabilities provide insight into the dominance of a particular source or dissipation term on the model dynamics. Each term is identified with a particular modeling assumption. We remark that we focus on the modulus of the forces and hence we cope with scalar terms. It is also important to remark that all the forces depend on the input variables, and they can be thus considered as random variables. Furthermore, these definitions are general and could be applied to any set of contributing variables, and not only to the force terms.

\begin{definition}[Dominance factors]
Let $(F_i)_{i\in I}$ be random variables on $(\Omega, \mathcal F, P_M)$. Then, $\forall i$, the dominant variable is defined as:
$$\Phi:=\max_i |F_i|.$$
In particular, for each $j \in I$, the dominance factors are defined as:
$$p_j:=P_M\left\{\Phi=|F_j|\right\}.$$
\end{definition}

We remark that the dominant variable $\Phi$ is also a random variable, and in particular it is a stochastic process parameterized in time and space. Moreover, we define the \emph{random contributions}, an additional tool that we use to compare the different force terms, following a less restrictive approach than the dominance factors. They are obtained dividing the force terms by the dominant force $\Phi$, and hence belong to $[0,1]$.

\begin{definition}[Expected contributions]
Let $(F_i)_{i\in I}$ be random variables on $(\Omega, \mathcal F, P_M)$. Then, $\forall i$, the random contribution is defined as:
$$C_i:=\left\{
\begin{array}{ll}
      \frac{F_i}{\Phi}, & \textrm{if }\Phi\neq 0; \\
      0, & \hbox{otherwise.}
    \end{array}
  \right.$$
where $\Phi$ is the dominant variable. Thus, $\forall i$, the expected contributions are defined by $E\left[C_i\right]$.
\end{definition}

In particular, for a particular location $x$, time $t$, and parameter sample $\omega$, we have $C_i(\underline{\textbf x},t,\omega)=0$ if there is no flow or all the forces are null. The expectation of $C_i$ is reduced by the chance of $F_i$ being small compared to the other terms, or by the chance of having no flow in $(\underline{\textbf x},t)$. The meaning of the random variable $\Phi(\underline{\textbf x},t)$ is explained by the dominance factors, and hence they can be used to define a further statistical decomposition of the random contributions, as detailed in the Appendix \ref{A-2}.

\section{Modeling of geophysical mass flows}\label{subsec:FlowTypes}
Dense large scale granular avalanches are a complex class of flows with physics that has often been poorly captured by models that are computationally tractable. Sparsity of actual flow data (usually only a posteriori deposit information is available), and large uncertainty in the mechanisms of initiation and flow propagation, make the modeling task challenging, and a subject of much continuing interest. Models that appear to represent the physics well in certain flows, may turn out to be poorly behaved in others, due to intrinsic physical, mathematical or numerical issues. Nevertheless, given the large implications on life and property, many models with different modeling assumptions have been proposed. For example in \citep{Iverson1997, Iverson2001, Denlinger2001, Pitman2003a, Denlinger2004, Iverson2004}, the depth-averaged model was applied in the simulation of test geophysical flows in large scale experiments. Several studies were specifically devoted to the modeling of volcanic mass flows \citep{Bursik2005,Kelfoun2005,Macias2008,Kelfoun2009,Charbonnier2013}. In fact, volcanos are great sources for a rich variety of geophysical flow types and provide field data from past flow events.

Modeling in this case proceeds by first assuming that the laws of mass and momentum conservation hold for properly defined system boundaries. The scale of these flows -- very long and wide with small depth led to the first most generally accepted assumption -- shallowness \citep{SavageHutter1989}. This allows an integration through the depth to obtain simpler and more computationally tractable equations. This is the next of many assumptions that have to be made. Both of these are fundamental assumptions which can be tested in the procedure we established above. Since, there is a general consensus and much evidence in the literature of the validity of these assumptions we defer analysis of these to future work.

The depth-averaged Saint-Venant equations that result are:
\begin{eqnarray}
\label{eq:D_A}
\frac{\partial h}{\partial t} +
\frac{\partial}{\partial x}(h \bar{u}) +
\frac{\partial}{\partial y}(h\bar{v}) &=& 0 \nonumber \\
\frac{\partial}{\partial t} (h\bar{u}) +
\frac{\partial}{\partial x}\left(h\bar{u}^2 + \frac{1}{2}k g_{z}h^2\right) + \frac{\partial}{\partial y}(h\bar{u}\bar{v}) &=& S_{x}\\
\frac{\partial}{\partial t} (h\bar{v}) +
\frac{\partial}{\partial x}(h\bar{u}\bar{v}) +
\frac{\partial}{\partial y}\left(h\bar{v}^2 + \frac{1}{2}k g_{z}h^2\right) &=& S_{y} \nonumber
\end{eqnarray}
Here the Cartesian coordinate system is aligned such that $z$ is normal to the surface; $h$ is the flow height in the $z$ direction; $h\bar{u}$ and $h\bar{v}$ are respectively the components of momentum in the $x$ and $y$ directions; and $k$ is the coefficient which relates the lateral stress components, $\bar{\sigma}_{xx}$ and $\bar{\sigma}_{yy}$, to the normal stress component, $\bar{\sigma}_{zz}$. The definition of this coefficient depends on the constitutive model of the flowing material we choose. Note that $\frac{1}{2} k g_z h^2$ is the contribution of depth-averaged pressure to the momentum fluxes. $S_x$ and $S_y$ are the sum local stresses: they include the gravitational driving forces, the basal friction force resisting to the motion of the material, and additional forces specific of rheology assumptions.

The final class of assumptions are the assumptions on the rheology of the flows -- in particular in this context assumptions used to model different dissipation mechanisms embedded in $S_x, S_y$ that lead to a plethora of models with much controversy on the most suitable model. We focus here on three models derived from different assumptions for essentially the same class of flows.

\subsection{Overview of the models}\label{subsec:Models}
In the three following sections, we briefly describe \emph{Mohr-Coulomb} (MC), \emph{Pouliquen-Forterre} (PF) and \emph{Voellmy-Salm} (VS) models. Models based on additional heterogeneous assumptions are possible, either more complex \citep{PitmanLe2005,Iverson2014} or more simple \citep{DadeHuppert1998}. We decided to focus on these three because of their historical relevance. Moreover, if the degree of complexity in the models is significantly different, model comparison should take into account that, but this is outside the purpose of this study \citep{Farrell2015}.

\subsubsection{Mohr-Coulomb}\label{MCM}
Based on the long history of studies in soil mechanics \citep{Rankine1857,DruckerPage52}, the Mohr-Coulomb rheology (MC) was developed and used to represent the behavior of geophysical mass flows \citep{SavageHutter1989}.

Shear and normal stress are assumed to obey Coulomb friction equation, both within the flow and at its boundaries. In other words,
\begin{equation}
\tau = \sigma \tan \phi,
\end{equation}
where $\tau$ and $\sigma$ are respectively the shear and normal stresses on failure surfaces, and $\phi$ is a friction angle. This relationship does not depend on the flow speed.

We can summarize the MC rheology assumptions as:
\begin{itemize}
\item \textit{Basal Friction} based on a constant friction angle.

\item \textit{Internal Friction} based on a constant friction angle.

\item \textit{Earth pressure coefficient} formula depends on the Mohr circle (implicitly depends on the friction angles).

\item Velocity based \textit{curvature effects} are included into the equations.
\end{itemize}

Under the assumption of symmetry of the stress tensor with respect to the \textit{z} axis, the earth pressure coefficient $k=k_{ap}$ can take on only one of three values $\{ 0, \pm 1\}$. The material yield criterion is represented by the two straight lines at angles $\pm \phi$ (the internal friction angle) relative to horizontal direction. Similarly, the normal and shear stress at the bed are represented by the line $\tau=-\sigma \tan(\delta)$ where $\delta$ is the bed friction angle.

\paragraph{MC equations} As a result, we can write down the source terms of the Eqs. (\ref{eq:D_A}):
\begin{eqnarray}\label{S_terms_MC}
S_x =& g_x h  - \frac{\bar{u}}{\| \underset{^\sim}{\bar{\textbf u}} \|} \left[h\left(g_z+\frac{\bar{u}^2}{r_x}\right)\tan(\phi_{bed})\right] - h k_{ap} \ {\rm sgn}\left(\frac{\partial \bar{u}}{\partial y}\right) \frac{\partial (g_z h)}{\partial y} \sin(\phi_{int}) \nonumber \\
 S_y =& g_y h  - \frac{\bar{v}}{\| \underset{^\sim}{\bar{\textbf u}} \|} \left[h\left(g_z +\frac{\bar{v}^2}{r_y}\right)\tan(\phi_{bed})\right] - h k_{ap} \ {\rm sgn}\left({\frac{\partial \bar{v}}{\partial x}}\right) \frac{\partial (g_z h)}{\partial x} \sin(\phi_{int})
\end{eqnarray}
Where, $\underset{^\sim}{\bar{\textbf u}} = (\bar{u} , \bar{v})$, is the depth-averaged velocity vector, $r_x$ and $r_y$ denote the radii of curvature
of the local basal surface. The inverse of the radii of curvature is usually approximated with the partial derivatives of the basal slope, e.g., $1/r_x = \partial \theta_x/\partial x$, where $\theta_x$ is the local bed slope.

In our study, sampled input parameters are $\phi_{bed}$, and $\Delta \phi:=\phi_{int}-\phi_{bed}$. In particular, the range of $\phi_{bed}$ depends on the case study, while $\Delta \phi \in [2^{\mathrm{\circ}}, 10^{\mathrm{\circ}}]$ \citep{Dalbey2008}.

\subsubsection{Pouliquen-Forterre}\label{PFM}
The scaling properties for granular flows down rough inclined planes led to the development of the Pouliquen-Forterre rheology (PF), assuming a variable frictional behavior as a function of Froude Number and flow depth \citep{Pouliquen1999, ForterrePouliquen2002, PouliquenForterre2002, ForterrePouliquen2003}.

PF rheology assumptions can be summarized as:
\begin{itemize}
\item \textit{Basal Friction} is based on an interpolation of two different friction angles, based on the flow regime and depth.

\item \textit{Internal Friction} is neglected.

\item \textit{Earth pressure coefficient} is equal to one.

\item Normal stress is modified by a \textit{pressure force} linked to the thickness gradient.

\item Velocity based \textit{curvature effects} are included into the equations.
\end{itemize}

Two critical slope inclination angles are defined as functions of the flow thickness, namely $\phi_{start}(h)$ and $\phi_{stop}(h)$. The function $\phi_{stop}(h)$ gives the slope angle at which a steady uniform flow leaves a deposit of thickness $h$, while $\phi_{start}(h)$ is the angle at which a layer of thickness $h$ is mobilized. They define two different basal friction coefficients.
\begin{eqnarray}
\mu_{start}(h)=\tan(\phi_{start}(h))\\
\mu_{stop}(h)=\tan(\phi_{stop}(h))
\end{eqnarray}

An empirical friction law $\mu_{b}(\|\underset{^\sim}{\bar{\textbf{u}}} \| , h)$ is then defined in the whole range of velocity and thickness. The expression changes depending on two flow regimes, according to a parameter $\beta$ and the Froude number $Fr=\| \underset{^\sim}{\bar{\textbf{u}}} \| / \ \sqrt{h g_{z}}$.

\paragraph{Dynamic friction regime - $Fr \ge \beta$}
\begin{equation}\label{mu_beta1}
\mu(h, Fr)=\mu_{stop}(h \beta / Fr)
\end{equation}

\paragraph{Intermediate friction regime - $0 \le Fr < \beta$}
\begin{equation}\label{mu_beta2}
\mu(h, Fr)=\left(\frac{Fr}{\beta}\right)^\gamma [\mu_{stop}(h)-\mu_{start}(h)] + \mu_{start}(h),
\end{equation}
where $\gamma$ is the power of extrapolation, assumed equal to $10^{-3}$ in the sequel \citep{PouliquenForterre2002}.

The functions $\mu_{stop}$ and $\mu_{start}$ are defined by:
\begin{equation}\label{mu-stop}
\mu_{stop}(h)=\tan\phi_{1} + \frac{\tan\phi_{2}-\tan\phi_{1}}{1+h/\it \mathcal{L}}
\end{equation}
and
\begin{equation}\label{mu-start}
\mu_{start}(h)=\tan\phi_{3} + \frac{\tan\phi_{2}-\tan\phi_{1}}{1+h/\it \mathcal{L}}
\end{equation}
The critical angles $\phi_{1}$, $\phi_{2}$ and $\phi_{3}$ and the parameters $\mathcal{L}, \beta$ are the parameters of the model.

In particular, $\mathcal{L}$ is the characteristic depth of the flow over which a transition between the angles $\phi_{1}$ to $\phi_{2}$ occurs, in the $\mu_{stop}$ formula. In practice, if $h\ll \mathcal L$, then $\mu_{stop}(h)\approx \tan\phi_{2}$, and if $h\gg \mathcal L$, then $\mu_{stop}(h)\approx\tan\phi_{1}$.

\paragraph{PF equations} The depth-averaged Eqs. (\ref{eq:D_A}) source terms thus take the following form:
\begin{eqnarray}\label{eq:S_terms_PF}
S_{x} &=&  g_{x} h -  \frac{\bar{u}}{\| \underset{^\sim}{\bar{\textbf{u}}} \|}\left[h \left(g_z+\frac{\bar{u}^2}{r_x}\right) \ \mu_{b}(\|\underset{^\sim}{\bar{\textbf{u}}} \| , h)\right] \ + g_{z}h\frac{\partial h}{\partial x} \nonumber \\
S_{y} &=&  g_{y} h - \frac{\bar{v}}{\| \underset{^\sim}{\bar{\textbf{u}}} \|}\left[h \left(g_z +\frac{\bar{v}^2}{r_y}\right) \ \mu_{b}(\|\underset{^\sim}{\bar{\textbf{u}}} \| , h)\right] \ + g_{z}h\frac{\partial h}{\partial y}
\end{eqnarray}

In our study, sampled input parameters are $\phi_1$, $\Delta \phi_{12}:=\phi_2-\phi_1$, and $\beta$. In particular, the range of $\phi_1$ depends on the case study, whereas $\Delta \phi_{12} \in [10^{\mathrm{\circ}}, 15^{\mathrm{\circ}}]$, and $\beta \in [0.1, 0.85]$. Moreover, $\phi_3=\phi_1+1^\mathrm{\circ}$, and $\mathcal{L}$ is equal to $1 dm$ and $1 mm$ in the two case studies, respectively \citep{PouliquenForterre2002,ForterrePouliquen2003}.

\subsubsection{Voellmy-Salm}\label{VSM}
The theoretical analysis of dense snow avalanches led to the VS rheology (VS) \citep{Voellmy1955, Salm1990, Salm1993, Bartelt1999}. Dense snow or debris avalanches consist of mobilized, rapidly flowing ice-snow mixed to debris-rock granules \citep{BarteltMcArdell2009}. The VS rheology assumes a velocity dependent resisting term in addition to the traditional basal friction, ideally capable of including an approximation of the turbulence-generated dissipation. Many experimental and theoretical studies were developed in this framework \citep{Gruber2007, Kern2009, Christen2010, Fischer2012}.

The following relation between shear and normal stresses holds:
\begin{equation}
\tau = \mu \sigma + \frac{\rho \| \underline{\textbf g} \|}{\xi} \| \underset{^\sim}{\bar{\textbf u}} \|^2,
\end{equation}
where, $\sigma$ denotes the normal stress at the bottom of the fluid layer and $\underline{\textbf g} = (g_{x} , g_{y} , g_{z})$ represents the gravity vector. The two parameters of the model are the bed friction coefficient $\mu$ and the velocity dependent friction coefficient $\xi$.

We can summarize VS rheology assumptions as:
\begin{itemize}
\item \textit{Basal Friction} is based on a constant coefficient, similarly to the MC rheology.

\item \textit{Internal Friction} is neglected.

\item \textit{Earth pressure coefficient} is equal to one.

\item Additional \textit{turbulent friction} is based on the local velocity by a quadratic expression.

\item Velocity based \textit{curvature effects} are included into the equations, following an different formulation from the previous models.
\end{itemize}

The effect of the topographic local curvatures is addressed with terms containing the local radii of curvature $r_x$ and $r_y$. In this case the expression is based on the speed instead of the scalar components of velocity \citep{PudasainiHutter2003,Fischer2012}.

\paragraph{VS equations} Therefore, the final source terms take the following form:
\begin{eqnarray}
\label{eq:S_terms_VS}
S_{x} &=&  g_{x} h - \frac{\bar{u}}{\| \underset{^\sim}{\bar{\textbf u}}\|} \ \left[ h \left(g_{z} + \frac{\| \underset{^\sim}{\bar{\textbf u}} \|^2}{r_{x}} \right)\mu+ \frac{\| \underset{^\sim}{\textbf g} \|}{\xi}\| \underset{^\sim}{\bar{\textbf u}} \|^2\right], \nonumber \\
S_{y} &=& g_{y} h - \frac{\bar{v}}{\| \underset{^\sim}{\bar{\textbf u}}\|} \ \left[ h \left(g_{z} + \frac{\| \underset{^\sim}{\bar{\textbf u}} \|^2}{r_{y}} \right)\mu+ \frac{\| \underset{^\sim}{\textbf g} \|}{\xi}\| \underset{^\sim}{\bar{\textbf u}} \|^2\right].
\end{eqnarray}

In our study, sampled input parameters are $\mu$, and $\xi$, on ranges depending on the case study. In particular, $\xi$ uniform sampling is accomplished in log-scale. In fact, values of $\xi$ between 250 and 4,000 $m/s^2$ have been described for snow avalanches \citep{Salm1993,Bartelt1999,Gruber2007}.

\subsection{Contributing variables}\label{sec:Fterms}
For analysis of modeling assumptions we need to record and classify the results of different modeling assumptions. In our case study, we focus on the right-hand side terms in the momentum equation and we call them RHS forces, or, more simply, the force terms. They are contributing variables since internal to the computation and rarely visible as a system output.
\begin{align}
\boldsymbol{RHS_1} = [g_x h,g_y h],
\end{align}
it is the gravitational force term, it has the same formulation in all models.

The expression of {\bf basal friction force} $\boldsymbol{RHS_2}$ depends on the model:
\begin{align}
\boldsymbol{RHS_2} =& -h g_z\tan(\phi_{bed})\left[\frac{\bar{u}}{\| \underset{^\sim}{\bar{\textbf u}} \|}, \frac{\bar{v}}{\| \underset{^\sim}{\bar{\textbf u}} \|} \right],\textmd{ in MC model.}\nonumber\\
\boldsymbol{RHS_2} =& - h g_z \ \mu_{b}(\|\underset{^\sim}{\bar{\textbf{u}}} \| , h)\left[\frac{\bar{u}}{\| \underset{^\sim}{\bar{\textbf{u}}} \|}, \frac{\bar{v}}{\| \underset{^\sim}{\bar{\textbf{u}}} \|}\right],\textmd{ in PF model.}\\
\boldsymbol{RHS_2} =& -h g_{z} \mu\left[\frac{\bar{u}}{\| \underset{^\sim}{\bar{\textbf u}}\|} , \frac{\bar{v}}{\| \underset{^\sim}{\bar{\textbf u}}\|}\right],\textmd{ in VS model.}\nonumber
\end{align}

The expression of the force related to the {\bf topography curvature}, $\boldsymbol{RHS_3}$, also depends on the model:
\begin{align}
\boldsymbol{RHS_3} =&-h \tan(\phi_{bed})\left[\frac{\bar{u}^3}{r_x\| \underset{^\sim}{\bar{\textbf{u}}} \|}, \frac{\bar{v}^3}{r_y\| \underset{^\sim}{\bar{\textbf{u}}} \|}\right],\textmd{ in MC model.}\nonumber\\
\boldsymbol{RHS_3} =& -h\ \mu_{b}(\|\underset{^\sim}{\bar{\textbf{u}}} \|,h)\left[\frac{\bar{u}^3}{r_x\| \underset{^\sim}{\bar{\textbf{u}}} \|}, \frac{\bar{v}^3}{r_y\| \underset{^\sim}{\bar{\textbf{u}}} \|}\right],\textmd{ in PF model.}\\
\boldsymbol{RHS_3} =& -h \mu\left[\frac{\bar{u}\| \underset{^\sim}{\bar{\textbf u}} \|}{r_{x}},\frac{\bar{v}\| \underset{^\sim}{\bar{\textbf u}} \|}{r_{y}}\right],\textmd{ in VS model.}\nonumber
\end{align}

All the three models have an additional force term, having a different expressions and different meaning in the three models:
\begin{align}
\boldsymbol{RHS_4} =&  - h k_{ap}\sin(\phi_{int})\left[ \ {\rm sgn}(\frac{\partial \bar{u}}{\partial y}) \frac{\partial (g_z h)}{\partial y},\ {\rm sgn}({\frac{\partial \bar{v}}{\partial x}}) \frac{\partial (g_z h)}{\partial x}\right],\textmd{ in MC model.}\nonumber\\
\boldsymbol{RHS_4} =& g_{z}h\left[\frac{\partial h}{\partial x}, \frac{\partial h}{\partial y}\right],\textmd{ in PF model.}\\
\boldsymbol{RHS_4} =& -\frac{\| \underset{^\sim}{\textbf g} \|}{\xi}\| \underset{^\sim}{\bar{\textbf u}} \|^2\left[\frac{\bar{u}}{\| \underset{^\sim}{\bar{\textbf u}}\|} \ ,\frac{\bar{v}}{\| \underset{^\sim}{\bar{\textbf u}}\|}\right],\textmd{ in VS model.}\nonumber
\end{align}
These contributing variables can be analyzed locally and globally for discriminating among the different modeling assumption.

Finally, we also study the spatial integrals defined by $F(t)=\int_{\mathbb R^k}f(\underline{\textbf x},t) d\underline{\textbf x}$, where $d\underline{\textbf x}$ is the area of the mesh elements. This provides a global view of the results and is complementary to the observations taken locally. For instance, by integrating the scalar product of source terms in the momentum balance and velocity we can compare the relative importance of modeling assumptions when we seek accuracy on global quantities.

\section{Small scale flow on inclined plane and flat runway}\label{sec:QoIs}
Our first case study assumes very simple boundary conditions, and corresponds to a laboratory experiment fully described in \citep{Webb2004, Bursik2005, WebbBursik2016}. It is a classical flow down an inclined plane set-up, including a change in slope to an horizontal plane (Fig. \ref{fig:Ramp-first}). Modeling flow of granular material down an inclined plane was explored in detail by several studies, both theoretically and experimentally (e.g. \cite{RuyerQuil2000, Silbert2001, Pitman2003b}).

In our setting, four locations are selected among the center line of the flow to accomplish local testing. These are: the initial pile location $L_1=(-0.7,0)$ m, the middle of the inclined plane $L_2=(-0.35,0)$ m, the change in slope $L_3=(0,0)$ m, the middle of the flat runway $L_4=(0.15,0)$ m.

\subsection{Preliminary consistency testing of the input ranges}\label{consistency}
Addressing a similar case study, \citep{Dalbey2008} assumed $\phi_{bed}=[15^\mathrm{\circ}, 30^\mathrm{\circ}]$, while \citep{WebbBursik2016} performed a series of laboratory experiments and found $\phi_{bed}=[18.2^\mathrm{\circ}, 34.4^\mathrm{\circ}]$. We relied on those published parameter choices to select a comprehensive parameter range. Figure \ref{fig:Ramp-first}b displays the screenshots of flow height observed in the extreme cases tested. The Digital elevation Map (DEM) has a 1mm cell size. Simulation options are - max\_time = 2 s, height/radius = 1.34, length\_scale = 1 m, number\_of\_cells\_across\_axis = 10, order = first, geoflow\_tiny = 1e-4 \citep{Patra2005,Aghakhani2016}. Initial pile geometry is cylindrical. We remark that small changes in the parameter ranges did not change significantly the results.

\begin{itemize}
\item \textbf{Material Volume:} $[449.0 \ ,\ 607.0] \ cm^3$, i.e. average of $528.0 \ cm^3$ and uncertainty of $\pm15\%$.
\item \textbf{Rheology models' parameters:}
\par\noindent \textbf{MC} - $\phi_{bed} \in [18^{\mathrm{\circ}}, 30^{\mathrm{\circ}}]$.

\vskip.1cm\noindent \textbf{PF} - $\phi_1 \in [10^{\mathrm{\circ}}, 22^{\mathrm{\circ}}]$.

\vskip.1cm\noindent \textbf{VS} - $\mu \in [0.22, 0.45]$, $\quad \log(\xi) \in [3, 4]$.
\end{itemize}

Fig \ref{fig:Ramp-first}b shows that even if maximum and minimum runout are both matching, the shape and lateral extent of the flow are different between the three models. In particular, MC model can produce the largest lateral extent, and the flow runout displays a larger lateral extent in PF. VS model displays an accentuated bow-like shape - the lateral wings remain behind the central section of the flow. This is due to the increased friction in the lateral margins.

\subsection{Observable outputs} \label{Obs1}
We express the flow height and acceleration as a function of time, measured in the four locations $L_1,\dots, L_4$ displayed in Fig. \ref{fig:Ramp-first}a. Uncertainty quantification (UQ) is performed, accordingly to the parameter ranges described above. We always show the mean values and the corresponding 5$^{\mathrm{th}}$ and 95$^{\mathrm{th}}$ percentile values, defining an uncertainty range.
\begin{figure}[H]
    \includegraphics[width=0.95\textwidth]{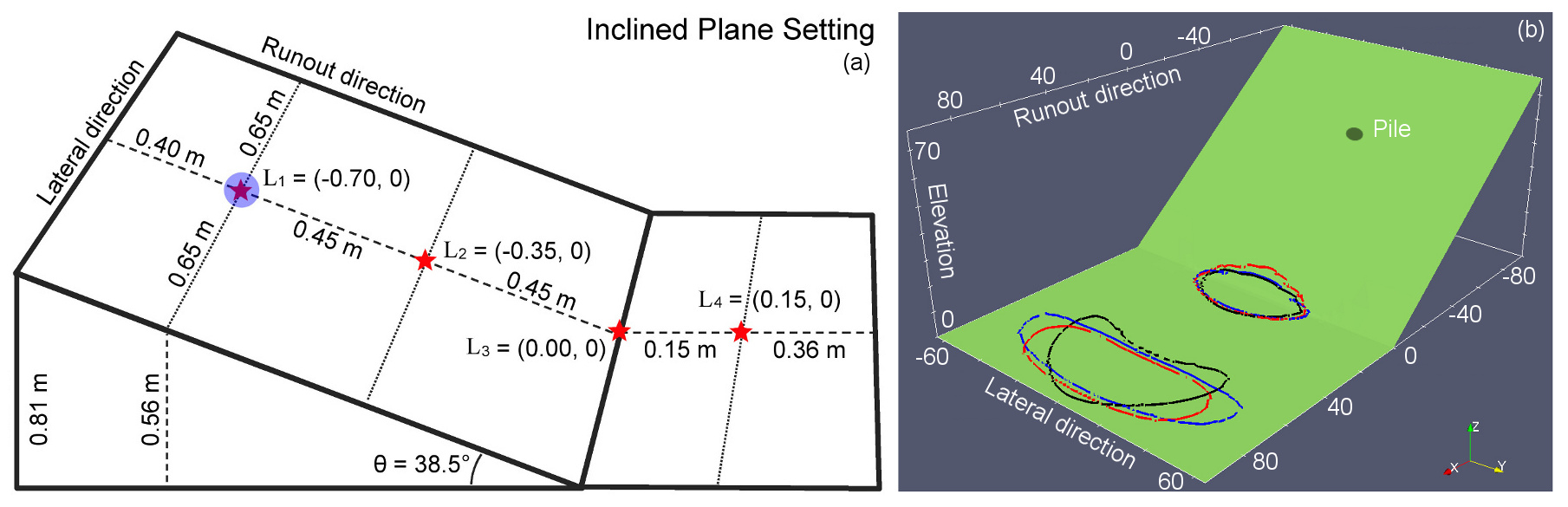}
    \centering
    \caption{(a) Inclined plane overview, including samples sites (red stars). Pile location is marked by a blue dot. (b) Contours of $h = 1.0$ mm at last simulated snapshot ($t = 1.5$ sec) for simulated flows with \emph{minimum runout} obtained from \emph{\textbf{min. volume -- max. resistance}}, and \emph{maximum runout} obtained from \emph{\textbf{max. volume -- min. resistance}}. {\color{red} \textbf{---}} : MC, {\color{blue} \textbf{---}} : PF, \textbf{---} : VS.} \label{fig:Ramp-first}
\end{figure}
\subsubsection{Flow height}
Figure \ref{fig:Ramp-H} displays the flow height, $h(L,t)$, at the points $(L_i)_{i=1,\dots,4}$. Given a particular type of flow and collected data we can clearly distinguish model skill in capturing not only that flow, but also other possible flows. Past work \citep{Webb2004, Patra2005} allowed us to conclude that MC rheology is adequate for modeling simple dry granular flows. We must note the effect of neglecting the flow when its height is $<1$ mm, which is at the scale of the smallest granular size \citep{Aghakhani2016}. In fact, continuum assumption would not be valid below this scale, and the 5$^{\mathrm{th}}$ and 95$^{\mathrm{th}}$ percentile plots are vertically cut to zero when they decrease over that threshold. The mean plot is not cut to zero but it is dulled by this cutoff.

In plot \ref{fig:Ramp-H}a, related to point $L_1$ placed on the initial pile, the values of $\sim 6\pm 1$ cm are equal and express the assigned pile height. The flow height decreases slightly faster in PF model, and slower in MC, compared to VS. Differences are more significant in plot \ref{fig:Ramp-H}b, related to point $L_2$, placed in the middle of the slope. Maximum flow height on average is greater in VS, $4.1\pm 0.2$ mm, but more uncertain in MC, $3.9\pm 0.4$ mm, and generally smaller in PF model, $3.0\pm 0.1$ mm. After the peak, PF decreases significantly slower than the other models. These height values are about 15 times smaller than initial pile height. None of the models leaves a significant material deposit in $L_1$ or $L_2$, and hence the 95$^{\mathrm{th}}$ percentile of the height is null at the ending-time. In contrast, a deposit is left at points $L_3$ and $L_4$, i.e. plot \ref{fig:Ramp-H}c placed at the change in slope, and plot \ref{fig:Ramp-H}d in the middle of the flat runway. At $L_3$ MC's deposit, $2$ mm with uncertainty [-2,+8] mm, is higher than the other models' deposits. The plot profile is bimodal, showing a first peak at $\sim 0.6$ s, and then a reduction until $1$ s, before the final accumulation. At $L_4$, deposit it is not significantly different between the three models. It measures $\sim 3$ mm on average, slightly more than this in VS, with uncertainty [-3,+7] mm.

In summary, MC is more distally stretched, but starts to deposit material earlier and closer to the initial pile compared to the other models. PF height is generally shorter, and is slightly earlier in its arrival at the sample points. These features are probably due to the correction term $g_z h \frac{\partial h}{\partial x}$ and $g_z h \frac{\partial h}{\partial y}$ which additionally pushes the material forward during the initial pile collapse. A linear cut in the flow height profile of PF is also observed when the flow thins on the slope. That is probably generated by the interpolation between the two basal friction angles as a function of flow height and speed. VS tends to be higher than the other models, if observed at the same instant, because of the reduced lateral spreading of material.
\begin{figure}[H]
         \centering
        \includegraphics[width=0.85\textwidth]{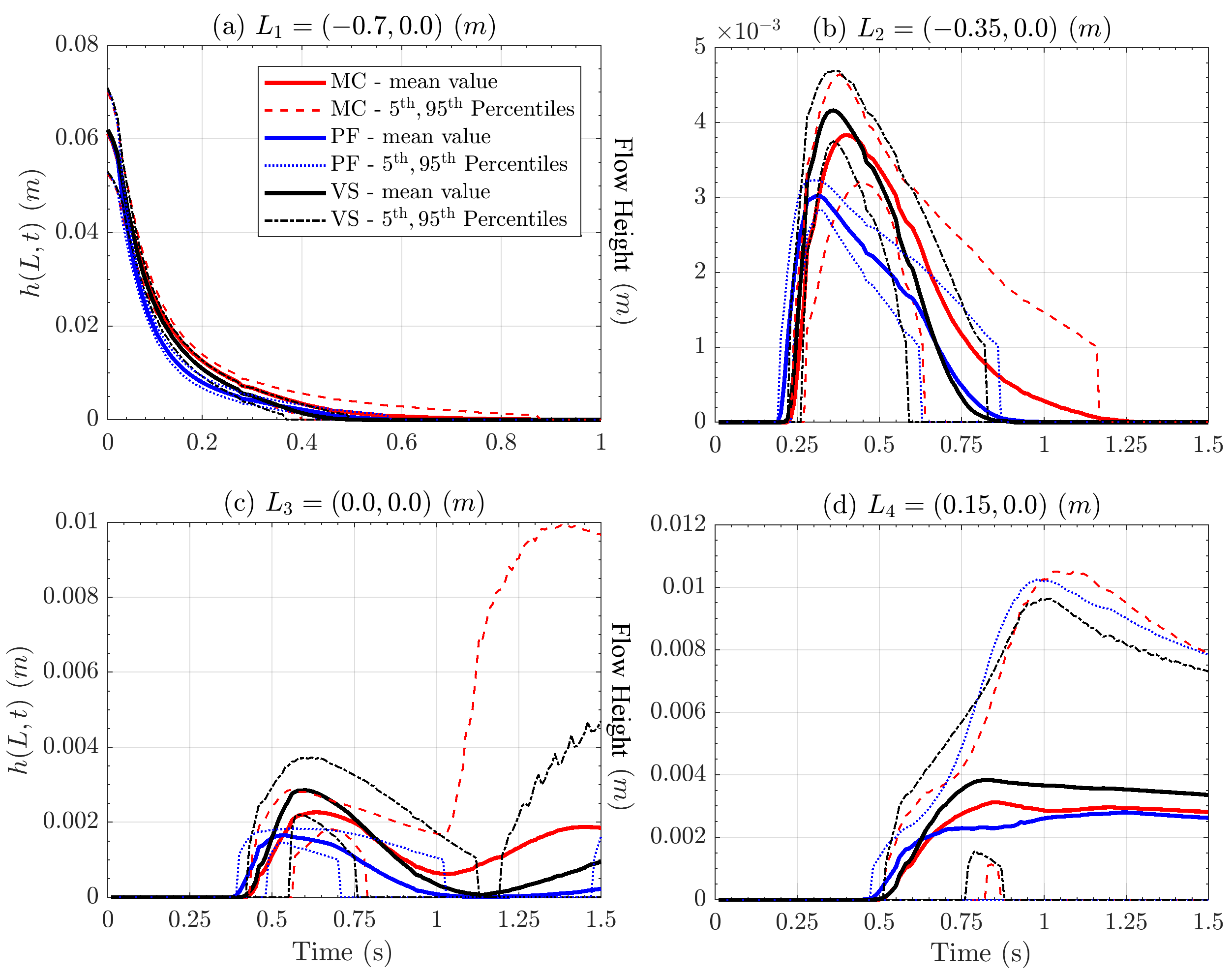}
        \caption{Flow height in four locations. Bold line is mean value, dashed/dotted lines are 5$^{\mathrm{th}}$ and 95$^{\mathrm{th}}$ percentile bounds. Different models are displayed with different colors. Plots are at different scale, for simplifying exposition.}
        \label{fig:Ramp-H}
\end{figure}
\subsubsection{Flow acceleration}
Figure \ref{fig:Ramp-AccL} shows the flow acceleration, $\Vert \underline{\mathbf{a}} \Vert(L,t)$, at the points $(L_i)_{i=1,\dots,4}$. Flow acceleration allows us to analyze the dynamics of the flow. We calculated it from the left-hand-side of the dynamical equation, but using the right-hand-side terms produces very similar results.

In plot \ref{fig:Ramp-AccL}a, related to point $L_1$, MC and VS show a plateau before $\sim 0.4$ sec, at $\sim 2.5 \ m/s^2$ and $\sim 3.5 \ m/s^2$, respectively, while PF linearly decreases between those same values. Instead in plot \ref{fig:Ramp-AccL}b, related to point $L_2$, MC and PF show a plateau, at $\sim 2.2$ m/s$^2$, while VS has a more bell-shaped profile. UQ tells us that PF is affected by a smaller uncertainty than the other models. In plot \ref{fig:Ramp-AccL}c, related to point $L_3$, all the models show a bimodal profile, with peaks at $\sim$ 0.5 sec and 0.8 sec. This is more accentuated in MC and VS, whereas the second peak is almost absent from PF's profile. At the first peak, acceleration values are significant, with average peaks in MC and PF both at $\sim 15 \ m/s^2$, and 95$^{\mathrm{th}}$ percentile plot reaching $\sim 50 \ m/s^2$ and $\sim 55 \ m/s^2$, respectively. VS shows about halved acceleration peak values. At the second peak, average acceleration values are similar in MC and VS, at $\sim 5 \ m/s^2$. In contrast, 95$^{\mathrm{th}}$ percentile plot is $> 50 \ m/s^2$ for MC, while $\sim 30 \ m/s^2$ in VS. In plot \ref{fig:Ramp-AccL}d, related to point $L_4$, the acceleration has a first peak at $\sim 4 \ m/s^2$, and a final asymptote at $\sim 2 \ m/s^2$ in MC and VS, $\sim 1 \ m/s^2$ for PF. These values indicate flow deceleration, and uncertainty is more relevant in MC and PF than in VS.
\begin{figure}[H]
         \centering
        \includegraphics[width=0.85\textwidth]{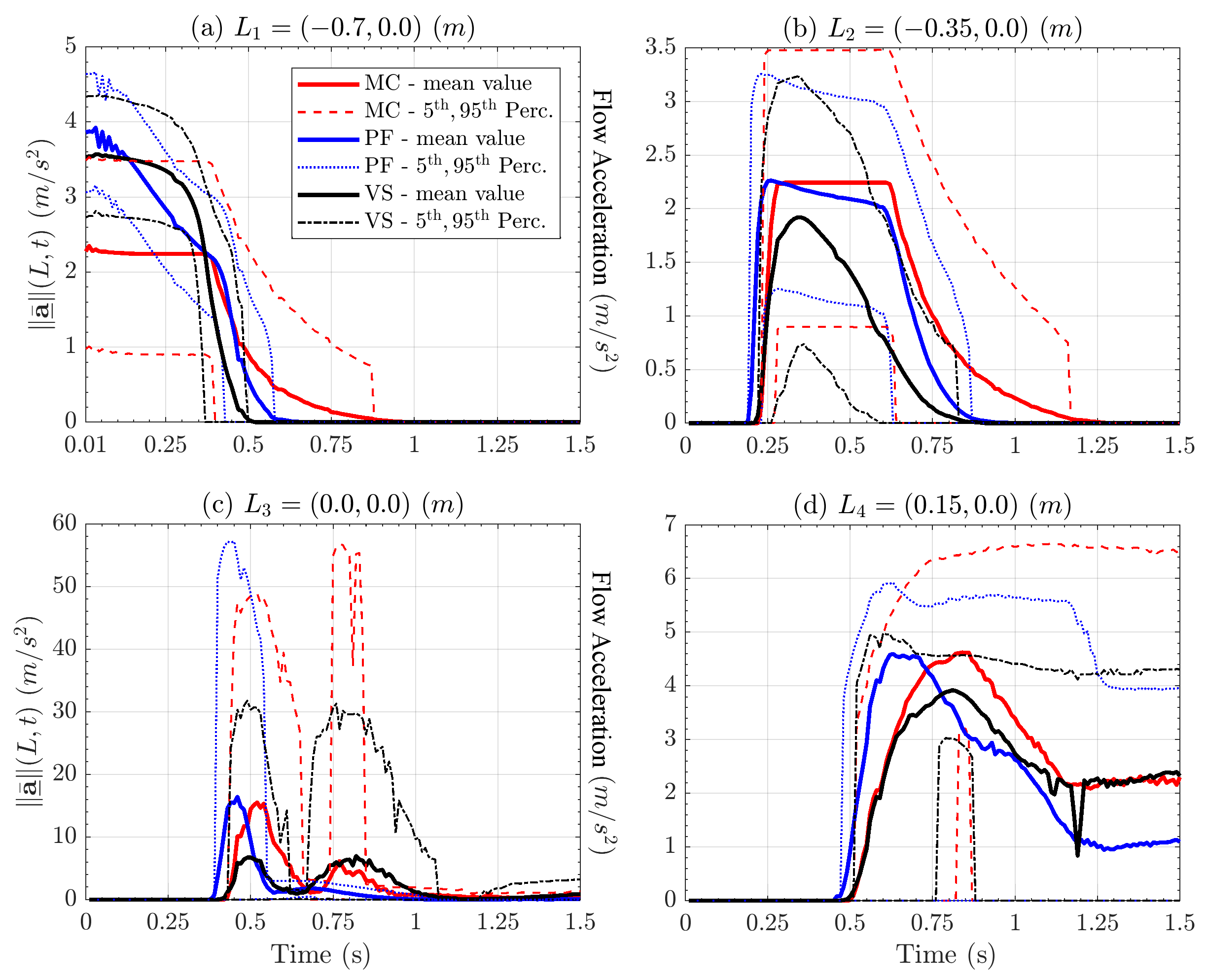}
        \caption{Flow acceleration modulus in four locations. Bold line is mean value, dashed lines are 5$^{\mathrm{th}}$ and 95$^{\mathrm{th}}$ percentile bounds. Different models are displayed with different colors. Plots are at different scale.}
        \label{fig:Ramp-AccL}
\end{figure}
In summary, the differences observed at the sample point on the inclined part are a consequence of the assumptions behind the models - double bed friction angle in PF, and speed dependent term in VS. At the slope change point, VS and MC display a bimodal profile in the acceleration. This is not a statistical effect, and it is also observed in single simulations. The first maximum is when the head of the flow hits the ground, while the second maximum is when the accumulating material in the tail arrives there. In VS the maxima are equal, because the tail is not laterally spread and hits the ground compactly. In contrast, PF does not show such a second peak, due to the accentuated lateral spreading in the tail.

\subsection{Statistical analysis of contributing variables}\label{Hq1}
Figure \ref{fig:Ramp-Pr_x} shows the dominance factors $(P_i)_{i=1,\dots,4}$, obtained from the modulus of the forces in the slope direction. Each dominance factor is the probability of a force term to be the greatest one, and hence belongs to $[0,1]$. The plots include also the probability of no-flow being observed at the considered point.

The plots \ref{fig:Ramp-Pr_x}a,b,c are related to point $L_1$, placed on the initial pile. Only the gravity $\boldsymbol{RHS_1}$ can be the dominant variable, and no-flow probability is $(1-P_1)$. Same thing in the plots \ref{fig:Ramp-Pr_x}d,e,f related to point $L_2$, placed in the middle of the slope. Then, plots \ref{fig:Ramp-Pr_x}g,h,i are related to point $L_3$, placed at the change in slope. In $L_3$, the curvature-related $\boldsymbol{RHS_3}$ can be the dominant term for a short time, with a peak probability of $\sim 30\%$. Plots \ref{fig:Ramp-Pr_x}j,k,l are related to point $L_4$, placed in the middle of the flat runway. In $L_4$ only the basal friction $\boldsymbol{RHS_2}$ can be the dominant term, except in PF where there is a $\sim 10\%$ chance that $\boldsymbol{RHS_4}$ is the dominant term at the ending-time.
\begin{figure}[H]
         \centering
        \includegraphics[width=0.95\textwidth]{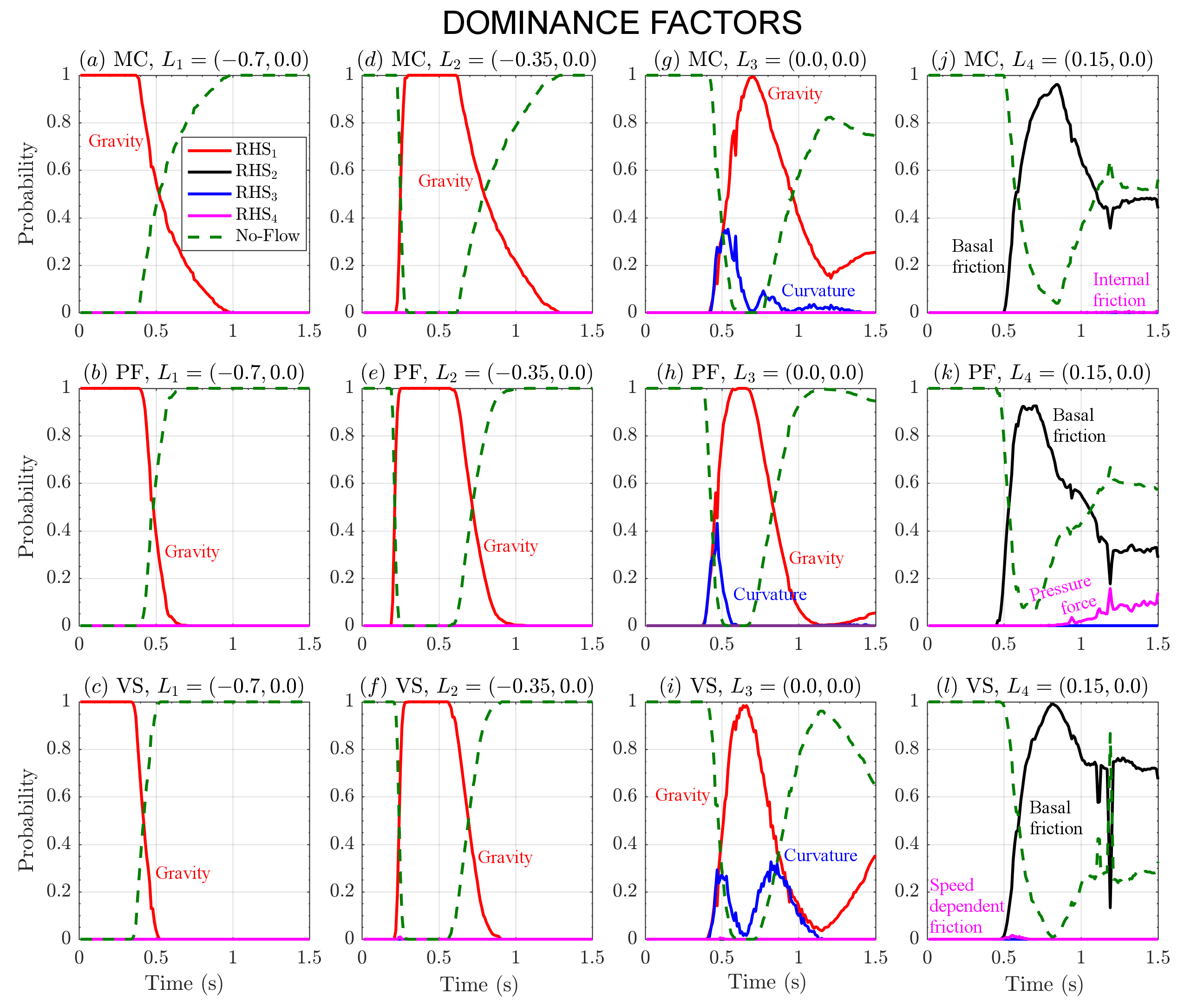}
        \caption{Dominance factors of the forces in the slope direction, in four locations. Different models are plotted separately: (a,d,g,j) assume MC; (b,e,h,k) assume PF; (c,f,i,l) assume VS. Different colors correspond to different force terms. No-flow probability is also displayed with a green dashed line.}
        \label{fig:Ramp-Pr_x}
\end{figure}
In this case there are minor differences in the dominance factors between the models. In particular, there is always a single dominant variable, and its profile is complementary with the no-flow probability. In the slope change point the differences between the models are more significant. Curvature term dominance probability is bimodal in MC and VS, and in VS the two peaks are equivalent. On the flat runway, in PF the pressure force can be the dominant variable with a small chance.

Dominance factors describe the main dynamics of the flow, but they are uninformative about the other variables. The expected contributions complete the statistical description of the contributing variables. They scale the force terms by the dominant dynamics, and represent the degree of relevance of the secondary assumptions with respect to the dominant one. They are averaged with respect to $P_M$ on the parameter range, and change as a function of time.

Figure \ref{fig:Ramp-Ci_x} shows $\mathbb E[C_i]_{i=1,\dots,4}$, for the three rheology models. $\forall i$, $C_i$ is related to the force term $\boldsymbol{RHS_i}$. We remark that in general, the second strongest force is never above 60\% of the dominant. The plots \ref{fig:Ramp-Ci_x}a,b,c are related again to point $L_1$. $C_1$ and $C_2$ give the major contributions, with a minor contribution from $C_4$ in VS. Contributions profiles are flat plateaus that start to wane after $0.4 s$. The plots \ref{fig:Ramp-Ci_x}d,e,f are related to point $L_2$. The major contributions are $C_1$ and $C_2$, and have a trapezoidal profile. In VS, $C_4$ resembles $C_2$, but it is bimodal instead than trapezoidal. The plots \ref{fig:Ramp-Ci_x}g,h,i are related to point $L_3$. $C_1$ and $C_2$ are still the largest, but their profiles are bell-shaped. In VS, $C_4$ is almost equal to $C_2$. In all the models, $C_3$ becomes significant, with a peak similar to $C_2$. It shows different profiles - triangular for MC and PF, bimodal for VS. In MC the decrease occurs in two stages. Due to the presence of deposit, all the contributions are small (particularly small in PF), but not zero at the ending time. The plots \ref{fig:Ramp-Ci_x}j,k,l are related to point $L_4$. Only $C_2$ has a major role, with a bell shaped profile faster to wax than to wane. Contribution $C_4$ has a minor role in VS and PF.
\begin{figure}[H]
         \centering
        \includegraphics[width=0.95\textwidth]{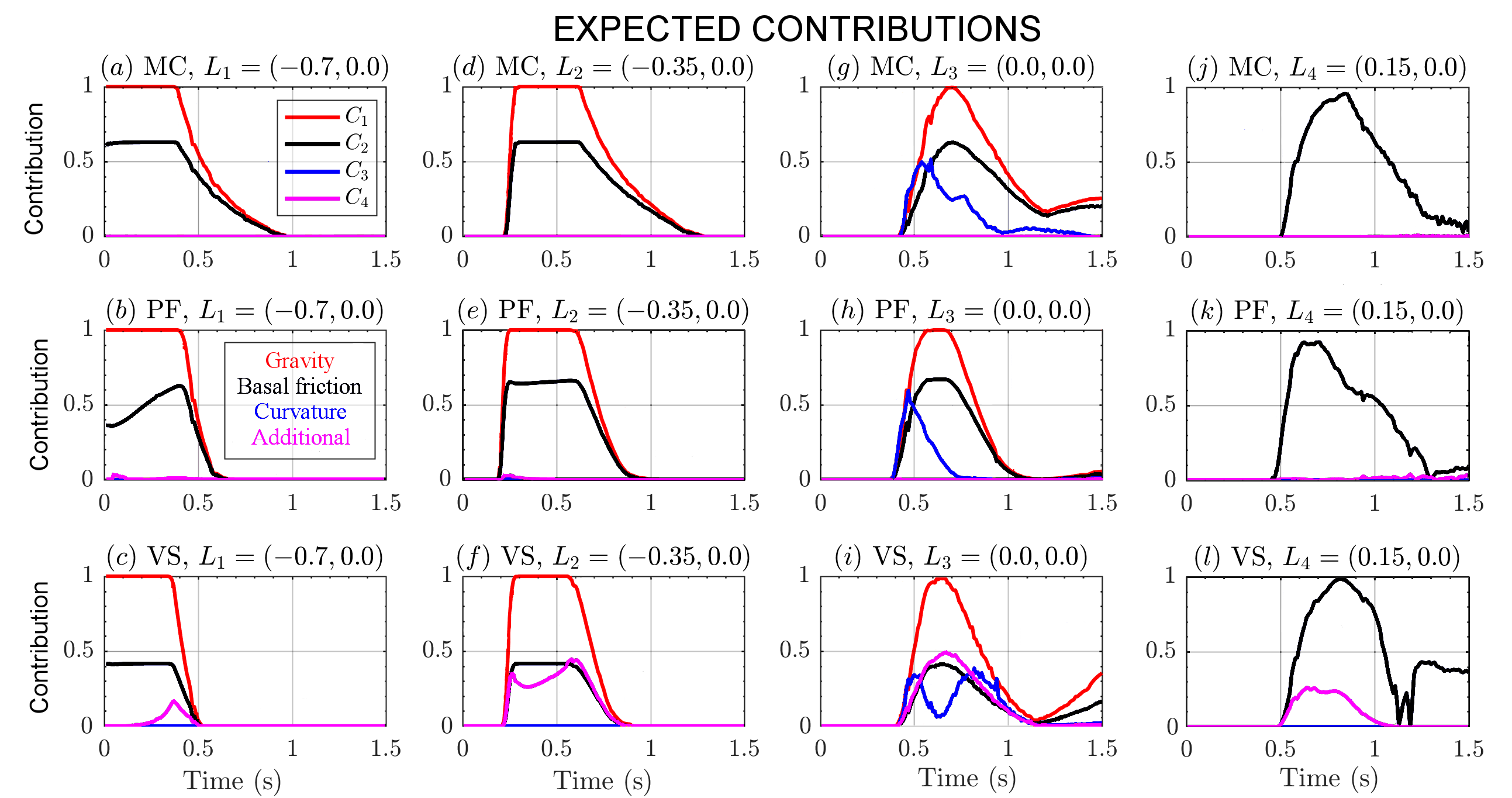}
        \caption{Expected contributions of forces in the slope direction, in four locations. Different models are plotted separately: (a,d,g,j) assume MC; (b,e,h,k) assume PF; (c,f,i,l) assume VS. Different colors correspond to different force terms.}
        \label{fig:Ramp-Ci_x}
\end{figure}
\subsection{Flow extent and spatial integrals}
Figure \ref{fig:Ramp-spatial} shows the volumetric average of speed and Froude Number. Estimates of local Froude Numbers are available in Supporting Information S1. Moreover, the figure shows the lateral extent and inundated area of flow, as a function of time. These global quantities have smoother plots than the local measurements describe above. However, most of the details observed in local measurements are not easy to discern anymore. In plot \ref{fig:Ramp-spatial}a the speed has a bell-shaped profile in all the models, with an average peak at $\sim 1.4 m/s$ and uncertainty range of $\pm 0.4 m/s$ for PF and VS. VS is slightly slower, reaching $\sim 1.3 m/s$ on average. MC shows a larger uncertainty range, of $\pm 0.6 m/s$. The maximum speed is reached first by VS and PF at $\sim 0.55 s$, and last by MC at $\sim 0.65 s$.

In plot \ref{fig:Ramp-spatial}b, also the Froude Number has a bell-shaped profile. $Fr$ peaks are temporally aligned with speed peaks, and are $\sim 10$ in VS, $\sim 11$ in MC, $\sim 13.5$ in PF, on average. Uncertainty range is about $\pm 4$ in all models. In plot \ref{fig:Ramp-spatial}c inundated area shows similar maximum values in PF and VS, at $\sim 0.6 m^2$ on average, and uncertainty of $\pm 0.15 m^2$. MC is lower, at $\sim 0.45 m^2$ on average, and less uncertain, $\pm 0.10 m^2$. VS does not decrease significantly after reaching the peak, whereas the other models contract their area to approximately half its maximum extent. In plot \ref{fig:Ramp-spatial}d the lateral extent starts equal to the pile diameter $\sim 15 cm$, and then rises in two stages in MC and PF. The second and greater rise starts at $\sim 0.6 s$, and corresponds with the time of arrival at the change in slope (see Fig. \ref{fig:Ramp-H}c). In contrast, VS rises without showing two phases. At $\sim 0.6 s$, average lateral extent is $\sim 50 cm$ in PF and VS, and $\sim 43 cm$ in MC. Uncertainty range is $\pm 7 cm$ for all models at that time. Final extent is $\sim 75 cm$ in PF, $\sim 65 cm$ in MC, $\sim 55 cm$ in VS. Uncertainty range is $\pm 5 cm$ in VS, but rises to $\pm 10 cm$ in MC and PF.
\begin{figure}[H]
        \centering
        \includegraphics[width=0.85\textwidth]{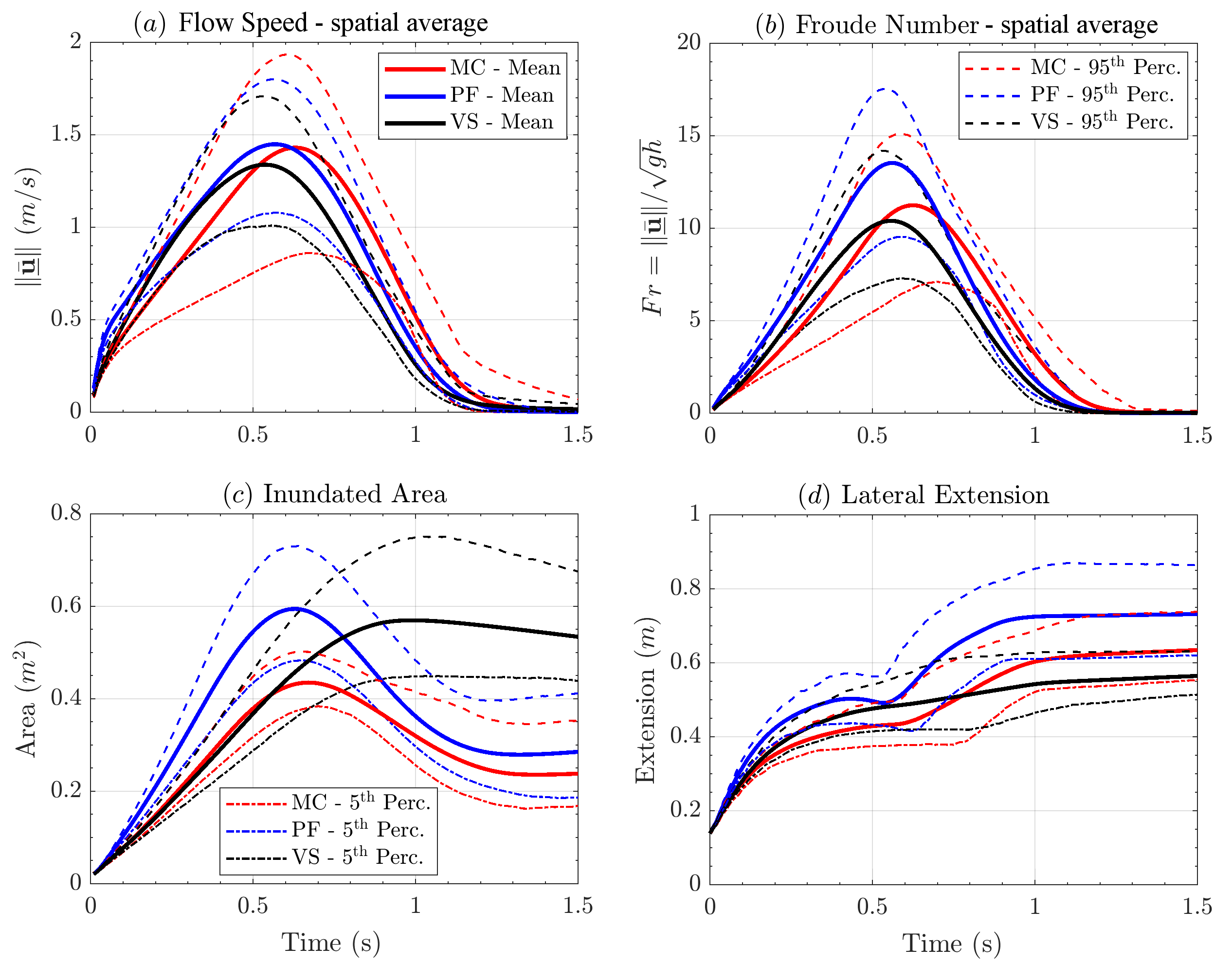}
        \caption{Comparison between spatial averages of $(a)$ flow speed, and $(b)$ Froude Number in addition to the flow $(c)$ lateral extent, and $(d)$ inundated area, as a function of time. Different models are displayed with different colors.}
        \label{fig:Ramp-spatial}
\end{figure}
In summary, the spatially averaged speed and Froude Number are significantly similar between the models. The differing features appear to be mostly localized in space. However, VS is significantly slower than the other models after the initial collapse. Moreover, it is the only model which presents a significant amount of long lasting and slowly moving material. Inundated area in PF has a greater maximum value, because of the accentuated lateral spread. In VS the inundated area almost does not decrease from its peak, because of the strictly increasing lateral spread. Vice versa, lateral spread in MC and PF has a temporary stop when the bulk of the flow hits the ground. This is a consequence of the interplay of accumulating material and the push of new material, which is stronger in the middle than in the lateral wings.

\subsection{Power integrals}
Figure \ref{fig:Ramp-Power-spatial} shows the spatial integral of powers (scalar product of force and velocity terms). The spatial integration is performed on half spatial domain, due to the symmetry with respect to the flow central axis. In particular, the power estimates assume a material density $\rho = 805 kg/m^3$. This is a constant scaling factor, and the plots are not further affected by its value.

Power terms have several features in common with the corresponding forces, and provide a decomposition of the acceleration sources. Main dissimilarity between forces and powers is that gravitational and basal friction powers have a profile starting from zero when the flow initiates, because the flow speed starts from zero. Corresponding plots of the force terms are included in Supporting Information S2. In plot \ref{fig:Ramp-Power-spatial}a the power of $\boldsymbol{RHS_1}$ represents the effect of the gravity in all the models. It starts from zero and rises up to $\sim 1.5 W$ at $\sim 0.55 s$, then decreases to zero after the material crosses the change in slope. Uncertainty range of $\pm 0.5 W$ affects the peak values. MC decreases slower than the other models, and has a more significant uncertainty after the change in slope. PF decreases faster.

In plot \ref{fig:Ramp-Power-spatial}b the power of  $\boldsymbol{RHS_2}$ represents the basal friction. It is negative and peaks to $\sim 1.1 \pm 0.2 W$ in MC, $\sim 1.0 \pm 0.2 W$ PF, $\sim 0.7 \pm 0.3 W$ in VS. A similar bell-shaped profile is shared by the three models, and basal friction power becomes negligible at the ending-time. In plot \ref{fig:Ramp-Power-spatial}c the power of $\boldsymbol{RHS_3}$ is related to the curvature effects, and it is not null only at the change in slope. It is always dissipative, i.e. opposed to flow velocity, because it is equivalent to the friction due to the additional weigh generated by centrifugal forces. It is weaker than $-0.1 W$ on average, ten times smaller than the previous powers, although MC lower percentile reaches $\sim -0.25 W$. VS displays a bimodal profile, with a second and weaker peak at $\sim 0.75 s$. In plot \ref{fig:Ramp-Power-spatial}d the power of $\boldsymbol{RHS_4}$ is related to the additional forces of the models, differently characterized. This term is really relevant in VS, although also in PF has a very short lasting positive peak up to $0.3 W$ before to become null at $\sim 0.1 s$. This power in VS is a speed dependent term, always dissipative. It is bell shaped and null before $\sim 0.1 s$ and after $\sim 1 s$. At the time of change in slope it is $\sim -0.7 W$, $\pm 0.3 W$.
\begin{figure}[H]
        \centering
        \includegraphics[width=0.90\textwidth]{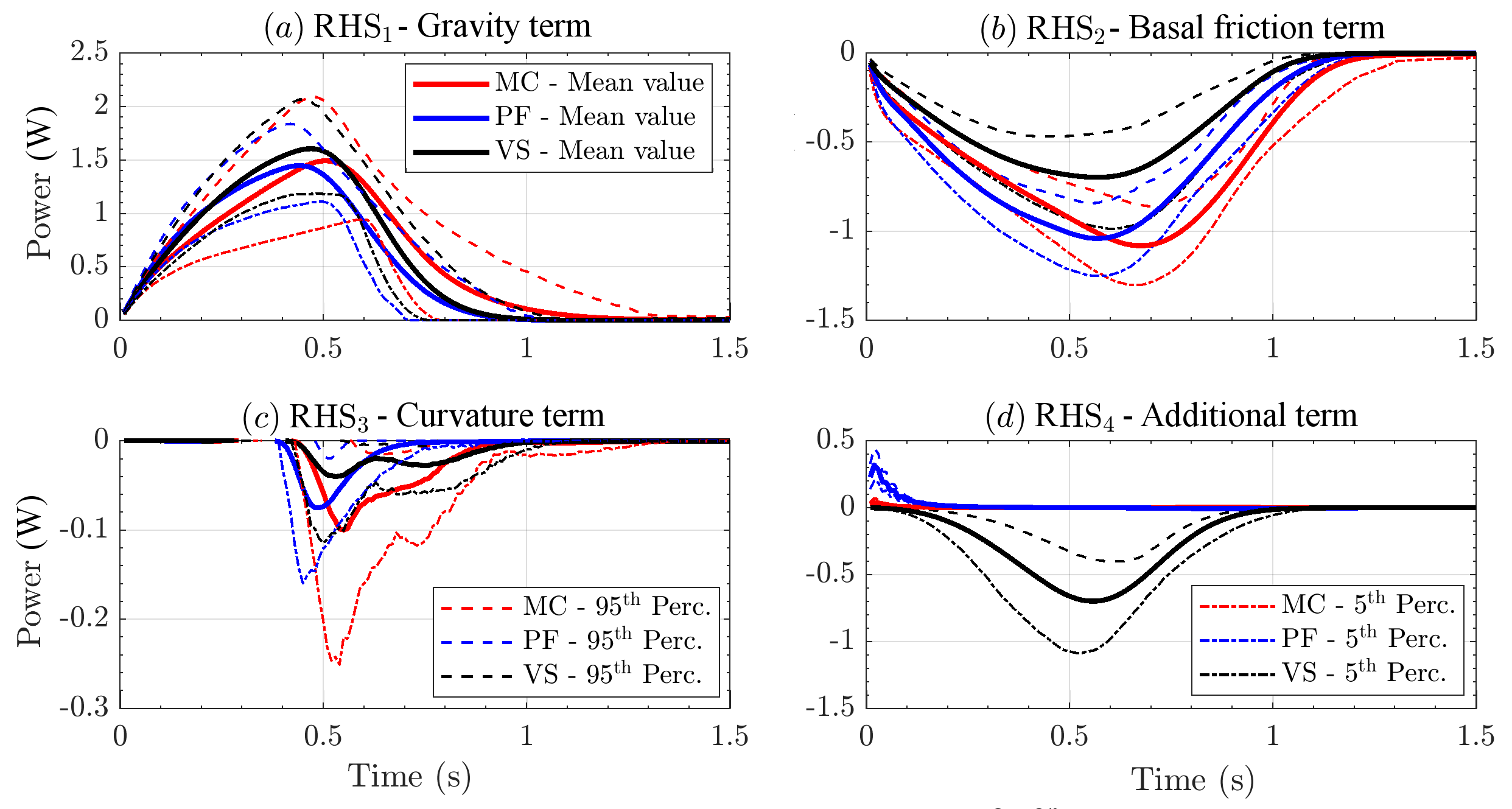}
        \caption{Spatial integral of the powers. Bold line is mean value, dashed lines are 5$^{\mathrm{th}}$ and 95$^{\mathrm{th}}$ percentile bounds. Different models are displayed with different colors.}
        \label{fig:Ramp-Power-spatial}
\end{figure}
In summary, this analysis produces a clear explanation of the source (gravity) and the dissipation of power. In particular, the dissipation lags the source (gravity) slightly. The curvature based term $\boldsymbol{RHS_3}$ has a minimal impact on VS. The differences between models are particularly relevant in term  $\boldsymbol{RHS_4}$. Speed dependent power in VS is at least one order of magnitude larger than the maximum values of the corresponding terms in MC and PF. Those are decreasing to zero after a short time from the initiation. Pressure force in PF is clearly positive in the speed direction, and hence contributing to push the flow ahead. The effects of internal friction in MC are almost negligible, and initially positive, then negative. This is motivated by an initial compression of the material during the pile collapse, followed by its stretching. It is worth remarking that $\boldsymbol{RHS_2}$ and $\boldsymbol{RHS_3}$ are both smaller in VS, due to the lower basal friction angles involved.

\section{Large scale flow on the SW slope of Volc{\'a}n de Colima}\label{QoI2}
Our second case study is a pyroclastic flow down the SW slope of Volc{\'a}n de Colima (MX) - an andesitic stratovolcano that rises to 3,860 m above sea level, situated in the western portion of the Trans-Mexican Volcanic Belt (Fig. \ref{fig:Colima-first}). Volc{\'a}n de Colima has historically been the most active volcano in M{\'e}xico \citep{DeLaCruzReina1993, Zobin2002, Gonzalez2002}. Pyroclastic flows generated by explosive eruptions and lava dome collapses of Volc{\'a}n de Colima are a well studied topic \citep{DelPozzo1995,Sheridan1995,Saucedo2002,Saucedo2004,Saucedo2005,Sarocchi2011,Capra2015}. The presence of a change in slope and multiple ravines characterize the SW slope of the volcano. Volc{\'a}n de Colima has been used as a case study in several research papers involving the Titan2D code \citep{Rupp2004, Rupp2006, Dalbey2008, Yu2009, Sulpizio2010, Capra2011, Aghakhani2016}. On July 10$^{\mathrm{th}}$-11$^{\mathrm{th}}$, 2015, the volcano underwent its most intense eruptive phase since its Subplinian-Plinian 1913 AD eruption \citep{Saucedo2010, Zobin2015, ReyesDaVilla2016, Capra2016, Macorps2018}.
\begin{figure}[H]
    \includegraphics[width=0.95\textwidth]{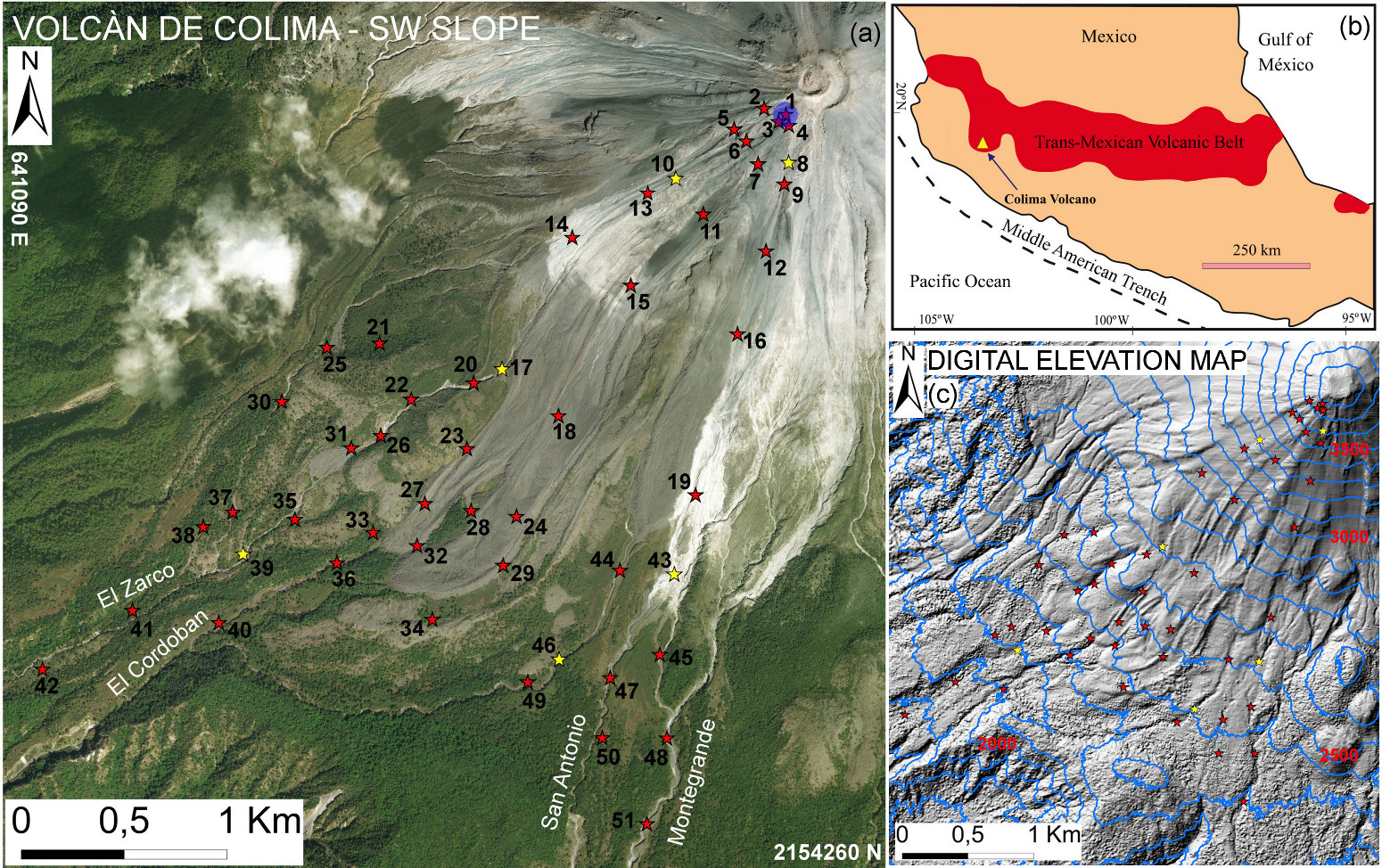}
    \centering
    \caption{(a) Volc{\'a}n de Colima (M{\'e}xico) overview, including 51 numbered locations (stars) and major ravines. Initial pile is marked by a blue dot. Coordinates are in UTM zone 13N. (b) Regional geology map. (c) Digital elevation map. Six preferred locations are colored in yellow. Elevation isolines are displayed in blue, elevation values in red.}
    \label{fig:Colima-first}
\end{figure}
We assume the flow to be generated by the gravitational collapse of a lava dome represented by a material pile placed close to the summit area - at 644956N, 2157970E UTM13 \citep{Rupp2006,Aghakhani2016}. A lava dome collapse occurs when there is a significant amount of recently-extruded highly-viscous lava piled up in an unstable configuration around a vent. Further extrusion and/or externals forces can cause the still hot dome of viscous lava to collapse, disintegrate, and avalanche downhill \citep{Bursik2005, Wolpert2016, Hyman2018}. The volcano produced several pyroclastic flows of this type, called Merapi style flows \citep{Macorps2018}. The hot, dense blocks in this ``block and ash'' flow (BAF) will typically range from centimeters to a few meters in size. Our computations were performed on a DEM of 5m-pixel resolution, obtained from Laser Imaging Detection and Ranging (LIDAR) data acquired in 2005 \citep{Davila2007, Sulpizio2010}. We placed 51 locations along the flow inundated area to accomplish local testing. After evaluating the results in all the locations, six of them are adopted as preferred locations, being representative of different flow regimes.

\subsection{Preliminary consistency testing of the input ranges}
In this same setting, \cite{Dalbey2008} assumed $\phi_{bed}=[15^\mathrm{\circ}, 35^\mathrm{\circ}]$, while \cite{Capra2011} adopted $\phi_{bed}=30^\mathrm{\circ}$.
\begin{figure}[H]
         \centering
        \includegraphics[width=0.85\textwidth]{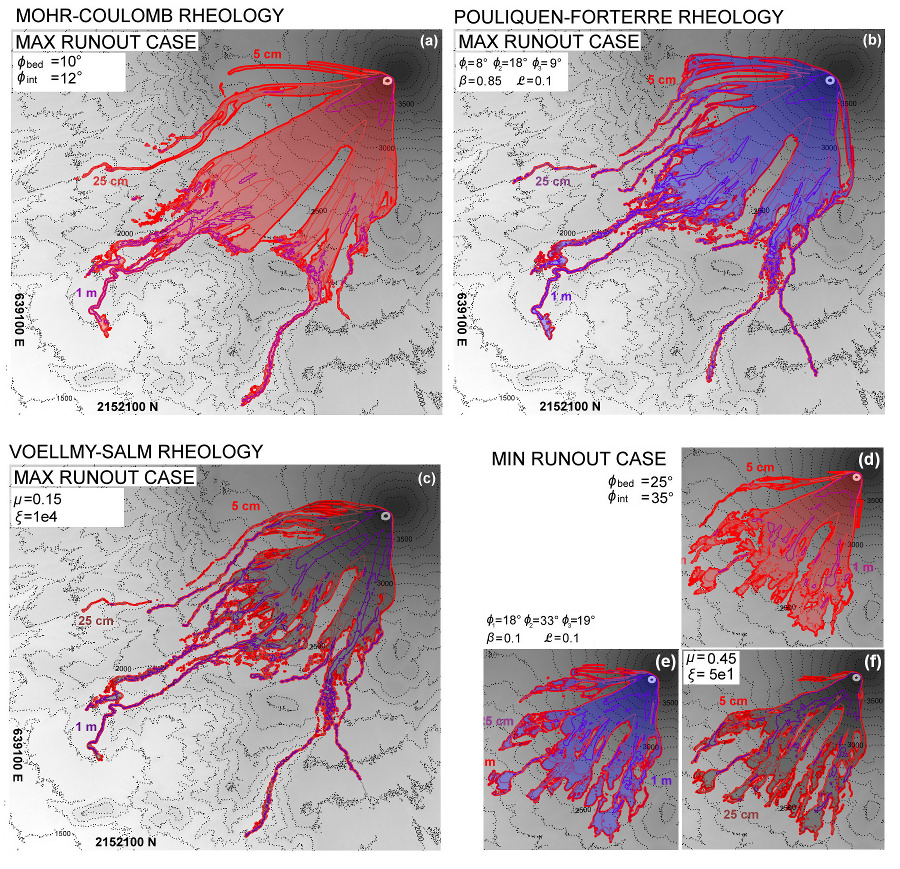}
        \caption{Volc\'an de Colima - comparison between \emph{max flow height} maps of simulated flow, assuming MC (a),(d), PF (b),(e), and VS (c),(f) models. Extreme cases - (a),(b),(c) \emph{\textbf{max. volume -- min. resistance}} and (d),(e),(f) \emph{\textbf{min. volume -- max. resistance}}.}
        \label{Colima-MaxMinExtents}
\end{figure}
Then, \cite{Spiller2014,Bayarri2015,Ogburn2016} found a statistical correlation between flow size and effective basal friction inferred from field observation of geophysical flows. A BAF at the scale of our simulations would possess $\phi_{bed}=[13^\mathrm{\circ}, 18^\mathrm{\circ}]$ according to their estimates. Small changes in the parameter ranges do not change significantly the results.

Figure \ref{Colima-MaxMinExtents} displays the maps of maximum flow height observed in the extreme cases tested. Simulation options are - max\_time = 7200 s (2 hours), height/radius = 0.55, length\_scale = 4e3 m, number\_of\_cells\_across\_axis = 50, order = first, geoflow\_tiny = 1e-4 \citep{Patra2005,Aghakhani2016}. Initial pile geometry is paraboloid.

Even if the maximum runout is matched between the models, they display significantly different macroscopic features. In particular, MC displays a further distal spread before entering the ravines, PF shows a larger angle of lateral spread at the initiation pile, and stops more gradually than MC with more complex inundated area boundary lines. VS is less laterally extended and the material reaches higher thickness. The flow generally looks significantly channelized, and displays several not-inundated spots due to minor topographical coul\'{e}es.

\begin{itemize}
\item \textbf{Material Volume:} $[2.08, 3.12] \times 10^5 \ m^3$, i.e. average of $2.6  \times 10^5 \ m^3$ and uncertainty of $\pm 20\%$.
\item \textbf{Rheology models' parameters:}
\par\noindent \textbf{MC} - $\phi_{bed} \in [10^{\mathrm{\circ}}, 25^{\mathrm{\circ}}]$.

\vskip.1cm\noindent \textbf{PF} - $\phi_1 \in [8^{\mathrm{\circ}}, 18^{\mathrm{\circ}}]$.

\vskip.1cm\noindent \textbf{VS} - $\mu \in [0.15, 0.45]$, $\quad \log(\xi) \in [1.7, 4]$.

\end{itemize}

\begin{figure}[H]
         \centering
        \includegraphics[width=0.90\textwidth]{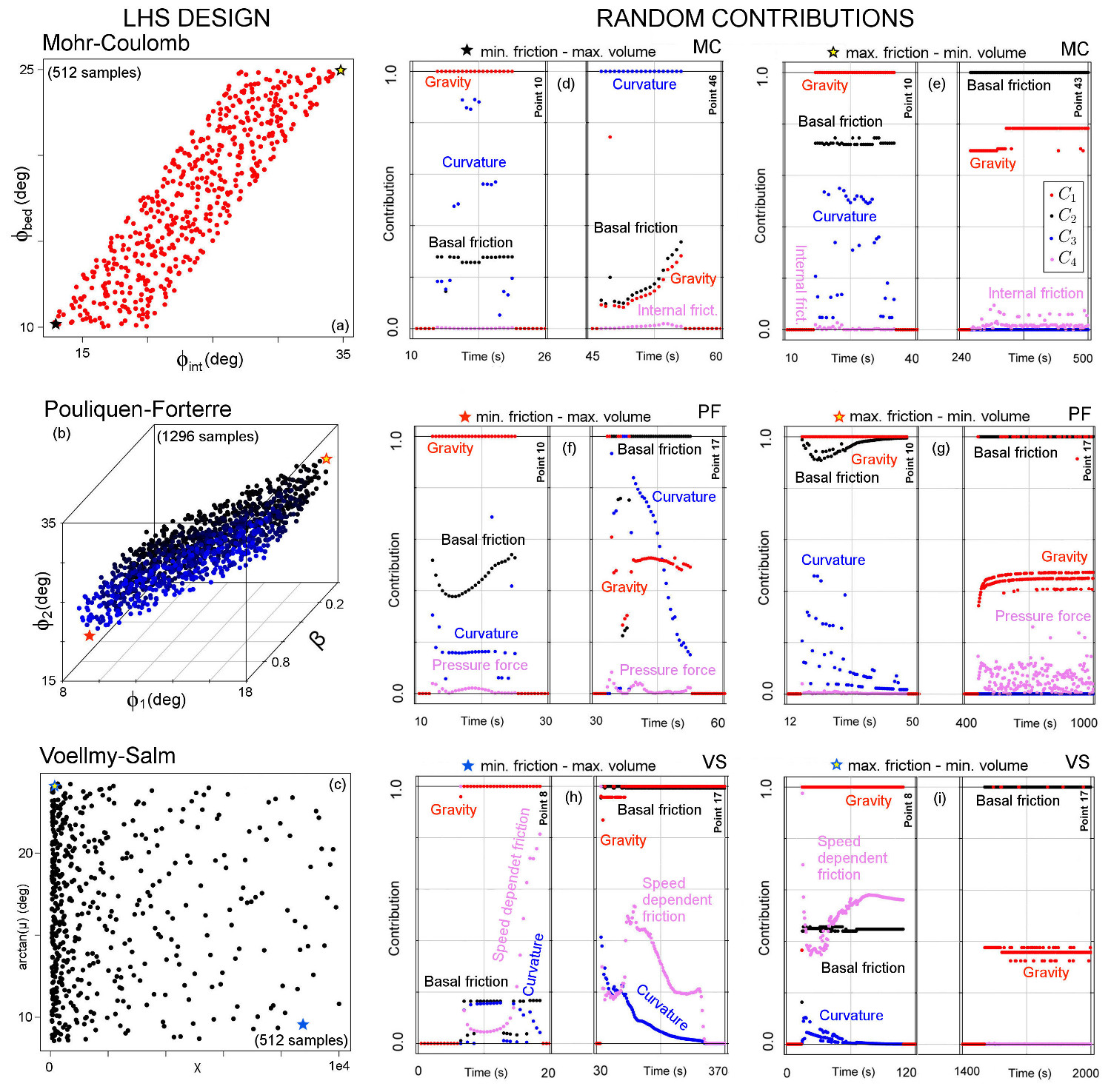}
        \caption{(a,b,c) Example of Latin Hypercube Sampling design, Volc\'an de Colima case study. Colored stars mark the values producing the minimum and maximum friction. The parameter values are projected with respect to their volume value. Plots of random contributions are included for (d,e) MC, (f,g) PF, and (h,i) VS model. Each plot includes two graphs. One refers to a proximal location to the initial pile (left), and the other to a more distal location (right). Point numbers refer to Fig.\ref{fig:Colima-extra}. Different colors correspond to different force terms.}
        \label{fig:Colima-CC1}
\end{figure}

\subsection{Exploring Flow Limits} \label{lhs_des_colima}
Figure \ref{fig:Colima-CC1}d,e,f,g,h,i show examples of the  contributions obtained assuming parameter values at the extremes of their range. The dominant variable is expressed by the dots on the top line, $C_i=1$. Data is inherently discontinuous due to the mesh modification, and it is reported with colored dots. If the mesh element which contains the considered spatial location changes, then the force term is calculated on a different region and suddenly changes too. We remark that this can also affect the dominant variable, and more than one random contribution can incorrectly appear to be at unity at the same time. However, it is evident that the dynamics and its temporal scale is evolving, and that the contributions can reveal a large amount of information about it.

We remark that, $\forall i$, the calculation of $\mathbb E[C_i]$ with respect to $P_M$ removes the effects of data discontinuity, and hence this is a fundamental step in our further analysis. We note that the above choices are easily changed, and if we are interested for instance in the performance of the models for very large or very small flows, a suitable volume range can be chosen and the procedure re-run.

\begin{figure}[H]
         \centering
        \includegraphics[width=0.95\textwidth]{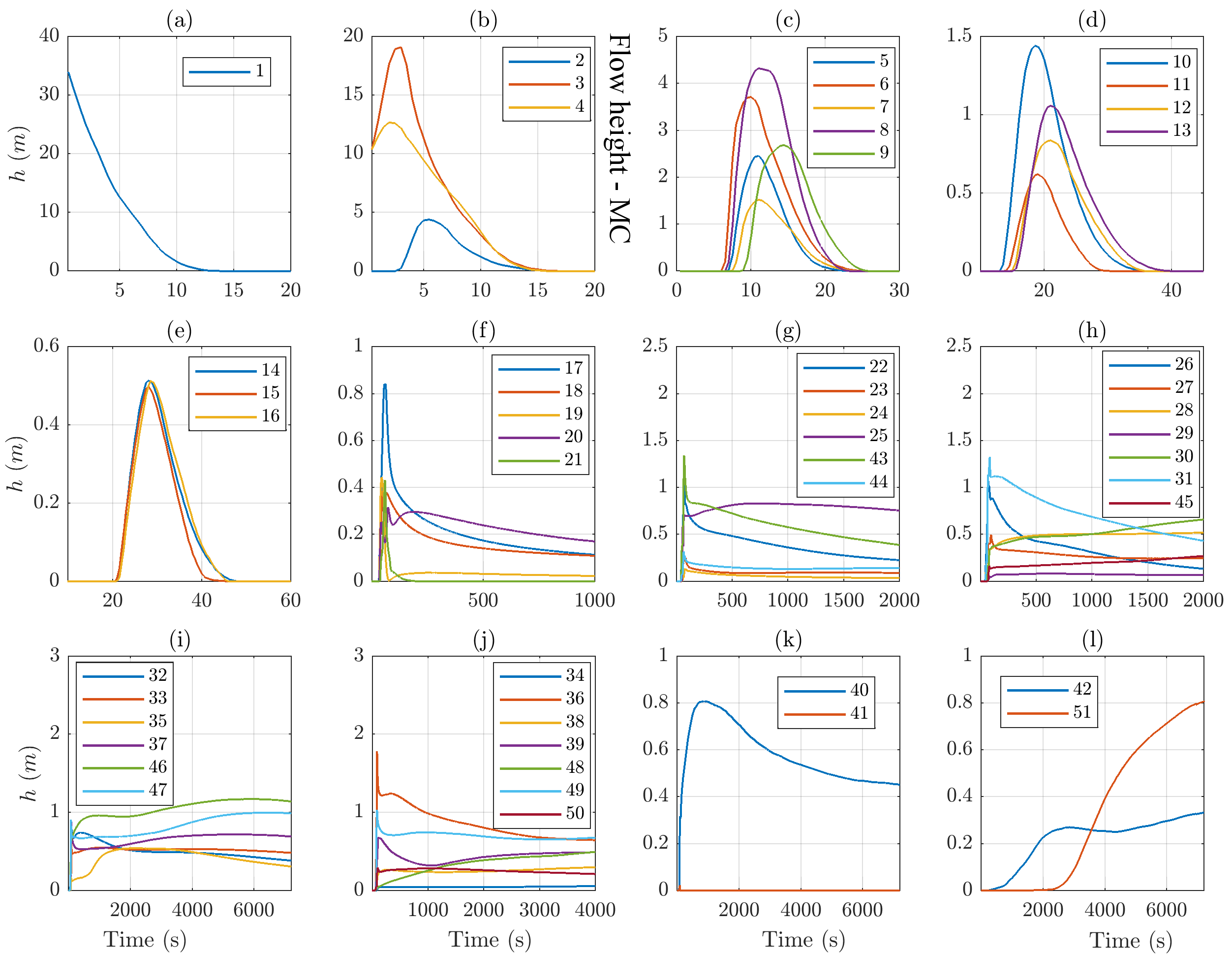}
        \caption{MC model, mean flow height $h(L,t)$ in 51 numbered locations (Fig. \ref{fig:Colima-first}). Different plots have different scales on either time and space axes.}
        \label{fig:BAF-H-MC}
\end{figure}

\subsection{Observable outputs}
The number of spatial locations is significantly high. We placed 51 points to span the entire inundated area, in search of different flow regimes, as displayed in Fig. \ref{fig:Colima-first}. These locations have an explorative purpose, whereas the six preferred locations will describe distinct flow regimes. We remark that all the distances reported in the following are measured in vertical projection, thus without considering the differences in elevation. Estimates of local Froude Numbers are available in Supporting Information S3 and S4.

Figure \ref{fig:BAF-H-MC} shows the mean flow height, $h(L,t)$, at the 51 spatial locations of interest, according to MC. In plot \ref{fig:BAF-H-MC}a, the only location is set on the center of the initial pile, and the profile is similar to what observed in point $L_1$ of the inclined plane case study, in Fig.\ref{fig:Ramp-H}a. In this case the height decreases from the initial value to zero in $\sim 15 s$. In plots \ref{fig:BAF-H-MC}b,c,d,e, the locations are are set at less than $\sim 1$ km radius from the initial pile. Their profiles are similar to point $L_2$ in Fig.\ref{fig:Ramp-H}b. The height profile is bell-shaped, starting from zero and then waning back to zero in $\sim 20$ s. All the dynamics occurs during the first minute. In plots \ref{fig:BAF-H-MC}f,g,h,i,j, points are set where the slope reduces, and the flow can channelize, and typically leaves a deposit. The distance from the initial pile is $\sim 2-3$ km. The profiles are sometimes similar to $L_3$ of Fig.\ref{fig:Ramp-H}c, other times to $L_4$ of Fig.\ref{fig:Ramp-H}d, in a few cases showing intermediate aspects. In general is either observed an initial short-lasting bulge followed by a slow decrease lasting for several minutes and asymptotically tending to a positive height, or a steady increase of material height tending to a positive height. In both cases it is sometimes observed a bimodal profile in the first 5 minutes. Finally, plots \ref{fig:BAF-H-MC}k,l focus on three points set at about the runout distance of the flow, in the most important ravines, at $\sim 4-5$ km from the initial pile. Profiles are similar to what observed in point $L_4$ of Fig.\ref{fig:Ramp-H}d.

\begin{figure}[H]
         \centering
        \includegraphics[width=0.95\textwidth]{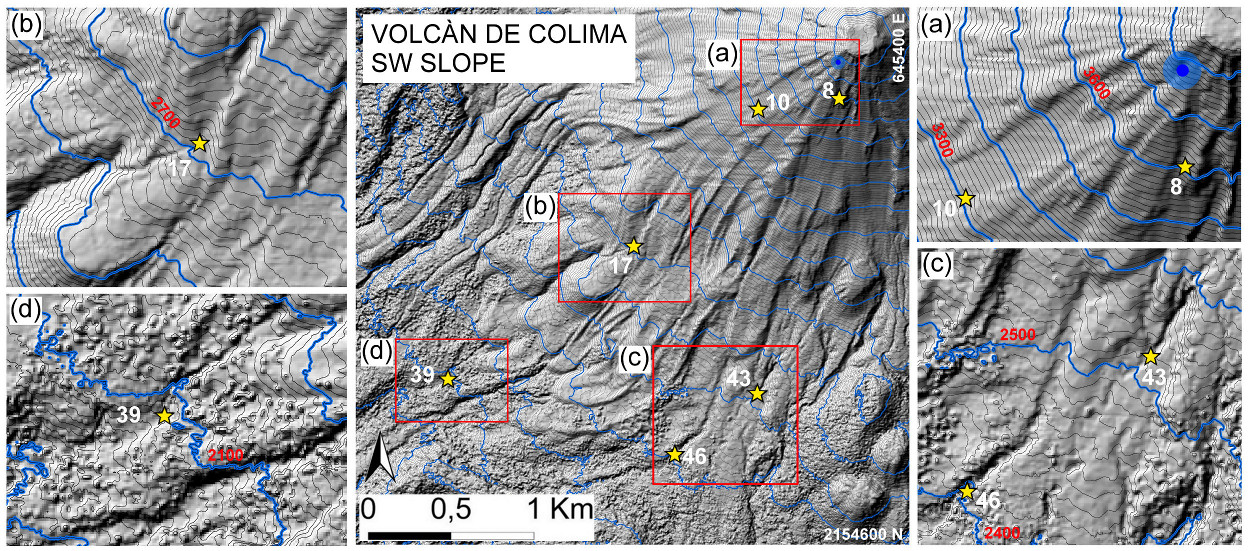}
        \caption{Volc{\'a}n de Colima (M{\'e}xico) overview, including six numbered locations (stars). In (a), (b), (c), (d) are enlarged the proximal topographic features to those locations. Initial pile is marked by a blue dot. Reported coordinates are in UTM zone 13N. Elevation isolines at every $10 m$ are displayed in black, at every $100m$ in bold blue. Elevation values in red.}
        \label{fig:Colima-extra}
\end{figure}

\subsubsection{Flow height in six locations}\label{Obs2}
We select six preferred locations, illustrative of a range of flow regimes. They are $[L_8, L_{10}, L_{17}, L_{39}, L_{43}, L_{46}]$, as displayed in Figure \ref{fig:Colima-extra}. The first two points, $L_8$ and $L_{10}$, are both proximal to the initiation pile. Points $L_{17}$ and $L_{43}$ are placed where the slope is reducing and the ravines are evident, and $L_{39}$ and $L_{46}$ are placed in the channels, further down-slope. In particular, $L_8$, $L_{43}$, and $L_{46}$ are at the western side of the inundated area, whereas $L_{10}$, $L_{17}$, and $L_{39}$ are at the eastern side. Estimates of flow acceleration are available in Supporting Information S5.

Figure \ref{fig:Colima-H} shows the flow height, $h(L,t)$, at the points $(L_i)_{i=8,10,17,39,43,46}$. Distances from the initial pile are in vertical projection. In plots \ref{fig:Colima-H}a,b, we show the flow height in points $L_8$ and $L_{10}$, $\sim 200 m$ and $\sim 500 m$ from the initial pile, respectively. Models MC and PF display similar profiles, positive for less than $15 s$ and bell-shaped. VS requires a significantly longer time to decrease, particularly in point $L_{10}$, where the average flow height is still positive after $\sim 200 s$. Peak average values in $L_8$ are $3.4 m$ in $PF$, $4.3 m$ in MC, $4.7 m$ in VS. Uncertainty is about $\pm 2 m$, halved on the lower side in $MC$, and $PF$. In $L_{10}$, models MC and PF are very similar, with peak height at $1.4 m$ and uncertainty $\pm 0.5 m$. Model VS, in contrast, has a maximum height of $1.1 m$ lasting for $50s$, and 95$^{th}$ percentile reaching $3.7 m$.
\begin{figure}[H]
         \centering
        \includegraphics[width=0.9\textwidth]{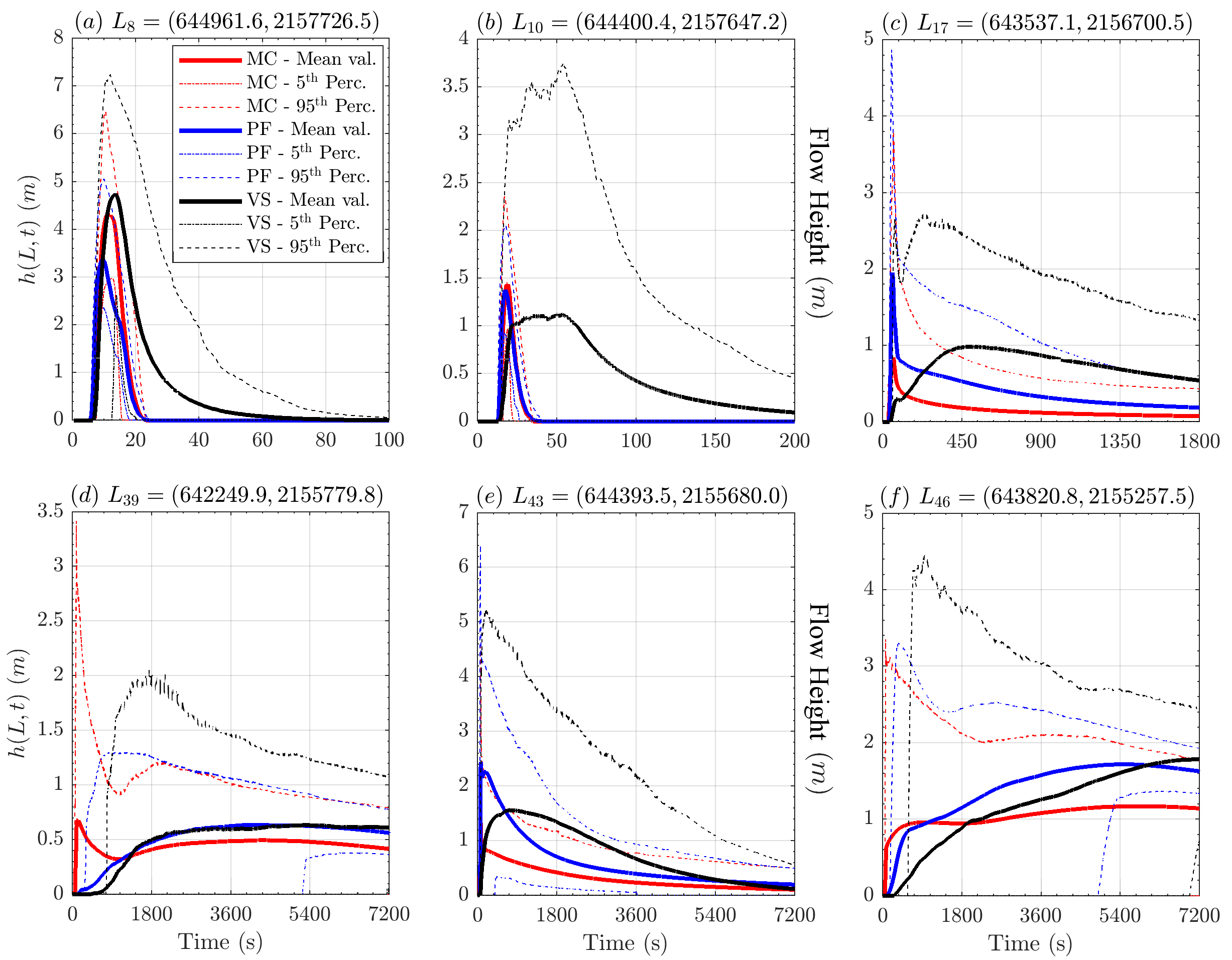}
        \caption{Flow height in six locations. Bold line is mean value, dashed/dotted lines are 5$^{\mathrm{th}}$ and 95$^{\mathrm{th}}$ percentile bounds. Different models are displayed with different colors. Plots are at different scale, for simplifying lecture.}
        \label{fig:Colima-H}
\end{figure}
In plots \ref{fig:Colima-H}c,e, we show the flow height in points $L_{17}$ and $L_{43}$, both at $\sim 2 km$ from the initial pile. All the models show a fast spike during the first minute, followed by a slow decrease. There is still material after $1800 s$. VS has a secondary rise peaking at $\sim 450 s$, which is not observed in the other models. This produces higher values for the most of the temporal duration, but similar deposit thickness after more than 1 hour. Maximum values are $1 m$ for MC, $2 m$ for PF, and $1.5 m$ for VS, in both locations. The 5$^{th}$ percentile is zero in all the models, meaning that the parameter range does not always allow the flow to reach these locations. The 95$^{th}$ percentile is above $5 m$, except for VS in point $L_{17}$. In plots \ref{fig:Colima-H}d,f, we show the flow height in points $L_{39}$ and $L_{46}$, both placed at more than $3 km$ from the initial pile. The three models all show a monotone profile except for  MC in point $L_{39}$, which instead displays an initial spike and a decrease before to rise again. A similar thing is observed in the 95$^{th}$ percentiles of all the models. It is significant that the 5$^{th}$ percentile of PF becomes positive after $\sim 5400 s$, meaning that almost surely the flow has reached that location. Deposit thickness in point $L_{39}$ is $\sim 0.5 m$ for all the models, whereas in point $L_{46}$ it is $1.7 m$ in VS, $1.6 m$ in PF, and $1.2 m$ in MC.

This analysis allows us to compare the local flow regimes with what is observed in the four sample points of the small scale case study. There is a new feature which was not present in the small scale flow - VS is temporally stretched, and material arrives later and stays longer in all the sample points. This is a consequence of the speed dependent term reducing flow velocity.

\subsection{Statistical analysis of contributing variables}\label{Hq2}
\subsubsection{Three locations proximal to the initial pile}
Figure \ref{fig:Colima-Pr1} shows the dominance factors $(P_i)_{i=1,\dots,4}$ of the RHS terms modulus, in the three proximal points $L_{8}$, $L_{10}$, and $L_{17}$, all closer than 1 km to the initial pile.
\begin{figure}[H]
         \centering
        \includegraphics[width=0.95\textwidth]{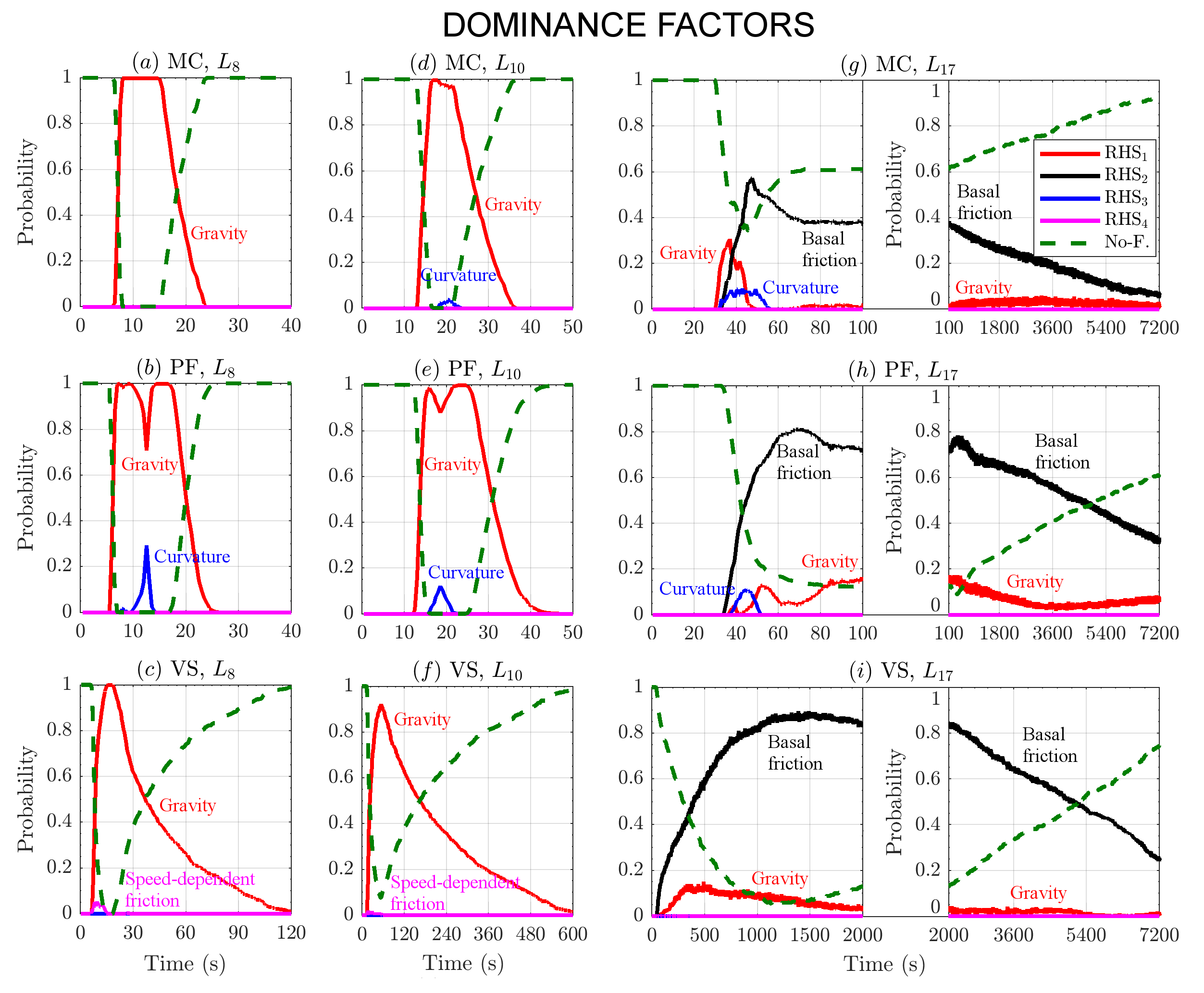}
        \caption{Dominance factors of the forces in three locations in the first km of runout. Different models are plotted separately: (a,d,g) assume MC; (b,e,h) assume PF; (c,f,i) assume VS. Different colors correspond to different force terms. No-flow probability is displayed with a green dashed line.}
        \label{fig:Colima-Pr1}
\end{figure}
The plots \ref{fig:Colima-Pr1}a,b,c and \ref{fig:Colima-Pr1}d,e,f are related to point $L_8$ and $L_{10}$, respectively. They are significantly similar. The gravitational force $\boldsymbol{RHS_1}$ is the dominant variable with a very high chance, $P_1>90\%$. In MC and PF there is a small probability, $P_3=5\%-30\%$, of $\boldsymbol{RHS_3}$ being the dominant variable for $\sim 5 s$. In VS it is observed a $P_4=5\%$ chance of $\boldsymbol{RHS_4}$ being dominant, just for a few seconds. Plots \ref{fig:Colima-Pr1}g,h,i concern the relatively more distal point $L_{17}$. They are split in two sub-frames at different time scale. In all the models, $\boldsymbol{RHS_2}$ is the most probable dominant variable, and its dominance factor has a bell-shaped profile. In all the models, also $\boldsymbol{RHS_1}$ has a small chance of being the dominant variable. In MC this chance is more significant, at most $P_1=30\%$ for $\sim 20 s$, and again $P_1=\%2$ in $[100, 7200] s$. In PF $P_1=15\%$ in two peaks, one short lasting at about $55 s$, and the second extending in $[100,500] s$. Also in VS, $P_1=15\%$ at $[300, 500] s$. Its profile is unimodal in time and becomes $P_1<2\%$ after $2000 s$. In MC and PF, $\boldsymbol{RHS_3}$ has a dominance factor $P_3=10\%$ at $[30, 50] s$ and $[40, 50] s$, respectively.

In summary, gravitational force is dominant with a very high chance until the no-flow probability becomes large. In MC and PF curvature related forces can also be dominant for a short time. In VS gravitational force is dominant for a larger time span than in the other models, because of the longer presence of the flow. The speed dependent friction can be dominant with a small probability at the beginning of the dynamics.
\begin{figure}[H]
         \centering
        \includegraphics[width=0.95\textwidth]{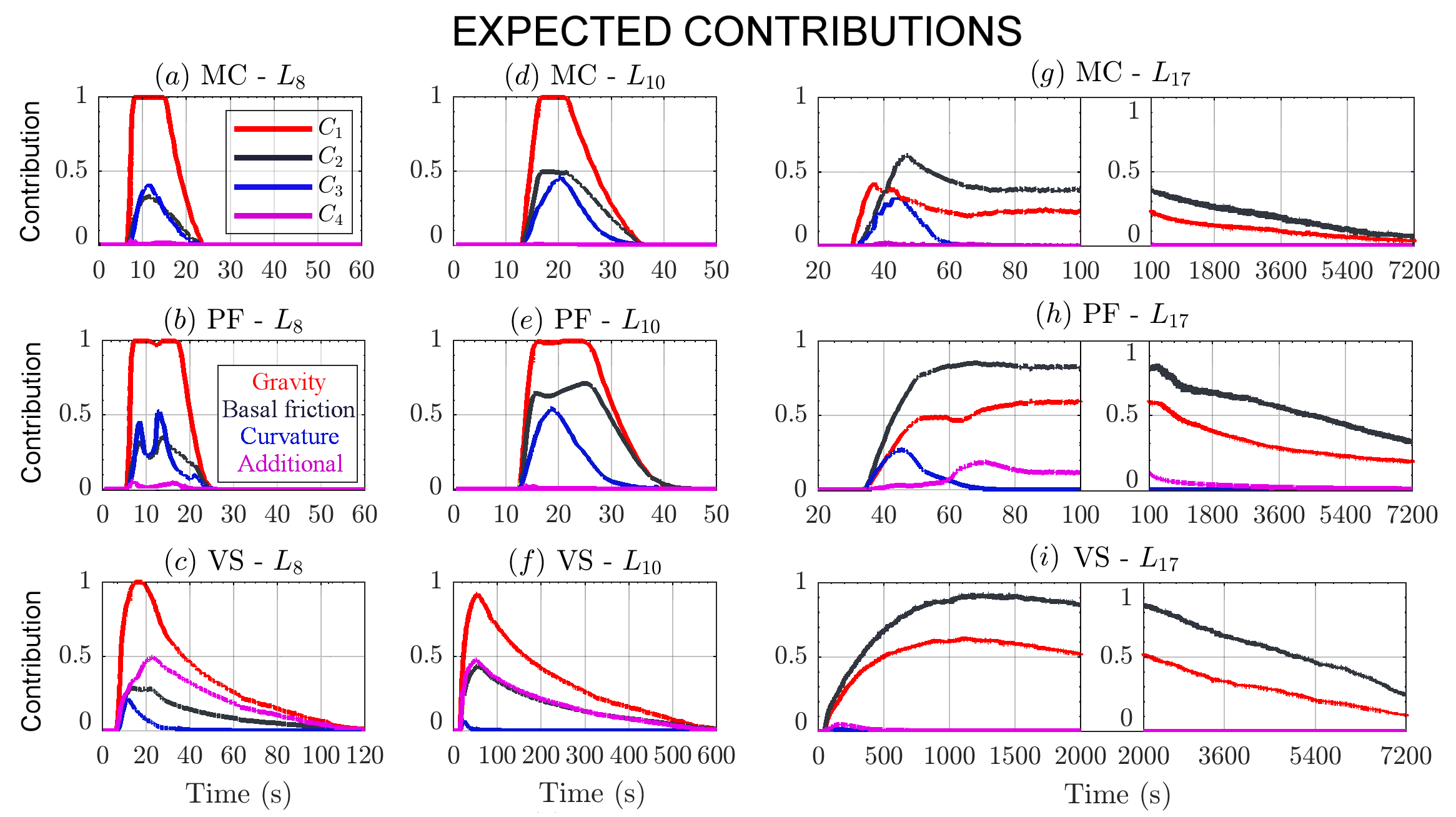}
        \caption{Expected contributions of the forces in three locations in the first km of runout. Different models are plotted separately: (a,d,g) assume MC; (b,e,h) assume PF; (c,f,i) assume VS. Different colors correspond to different force terms.}
        \label{fig:Colima-Ci_1}
\end{figure}
Figure \ref{fig:Colima-Ci_1} shows the corresponding expected contributions $\mathbb E[C_i]_{i=1,\dots,4}$. $\forall i$, $C_i$ is related to the force term $\boldsymbol{RHS_i}$. The contributions in points $L_8$ and $L_{10}$ are shown in \ref{fig:Colima-Ci_1}a,b,c and \ref{fig:Colima-Ci_1}d,e,f, respectively. The plots related to the same model are similar. In all the models $C_1$ is significantly larger than $C_2$ and $C_3$, which are almost equivalent in MC and PF, while $C_2>C_3$ in VS. $C_4$ always gives a negligible contribution, except in VS, where it is comparable to $C_2$. In $L_8$, following PF, $C_3$ is bimodal, whereas it is unimodal in MC and VS. This is not observed in $L_{10}$. In $L_8$, $C_3$ is greater than in $L_{10}$, compared to the other forces. VS always shows a slower decrease of the plots. In plots \ref{fig:Colima-Ci_1}g,h,i are shown the expected contributions in $L_{17}$. The plots are split in two sub-frames at different time scale. Initial dynamics is dominated by $C_2$, except for in MC, and only for a short time, $[30, 35] s$. In MC there is an initial peak of $C_2$ which is not observed in the other models. $C_3$ has a significant size, in MC and PF, and unimodal profile. In PF, after $C_3$ wanes, at about $60 s$ also $C_4$ becomes not negligible for $\sim 40 s$. The second part of the temporal domain is characterized by a slow decrease of $C_2>C_1$.

\subsubsection{Three locations distal from the initial pile}
Figure \ref{fig:Colima-Pr2} shows the dominance factors $(P_i)_{i=1,\dots,4}$ in the three distal points $L_{39}$, $L_{43}$, and $L_{46}$, all more than 2 km far from the initial pile. Plots \ref{fig:Colima-Pr2}a,b,c and $L_{39}$ are dominated by $\boldsymbol{RHS_1}$. In all the models $P_1$ is increasing and $P_1>90\%$ at the end of the simulation. In MC, $P_1$ shows a plateau at $\sim 40\%$ in $[90,2000] s$ preceded and followed by steep increases, while in the other models it rises gradually. $P_2>0$ after $\sim 500 s$ and $3600 s$, respectively, but is never greater than $2\%$. In MC $P_3 \approx 10\%$ at $[50,70] s$. No-flow probability becomes zero in PF and VS, while stops at $20\%$ in MC. Plots \ref{fig:Colima-Pr2}d,e,f are related to point $L_{43}$, and are remarkably complex. In MC, either $P_1$ and $P_2$ are $\sim 35\%$ in the first $200 s$. Then, $P_2$ increases, and $\boldsymbol{RHS_2}$ becomes the only dominant variable after $3600 s$. The no-flow probability is never below $30\%$. $P_3=35\%$ in $[40, 60] s$. Instead, in PF $P_1>90\%$ until $3600 s$, and $P_2$ rises only in the very last amount of time, reaching $P_2=P_1=40\%$. The no-flow probability is very low during the most of the temporal window, rising at $20\%$ only at $7200 s$. Both $P_3$ and $P_4$ show short peaks, $\sim 10\%$, at $[50,60] s$. In VS the no-flow probability is never below $20\%$, and the dominance factors are broadly equivalent to MC, although $P_1$ is the greatest up to $4000 s$, and $P_3\equiv0$. Plots \ref{fig:Colima-Pr2}g,h,i are related to point $L_{46}$ and they are similar to those recorded at point $L_{17}$, but $P_2>90\%$ and the no-flow probability decreases to zero in the second half of the simulation. Moreover, in all the models $P_1$ does not show any initial peak and instead increases slowly, reaching $P_1=10\%$ after more than $3600 s$.
\begin{figure}[H]
         \centering
        \includegraphics[width=0.95\textwidth]{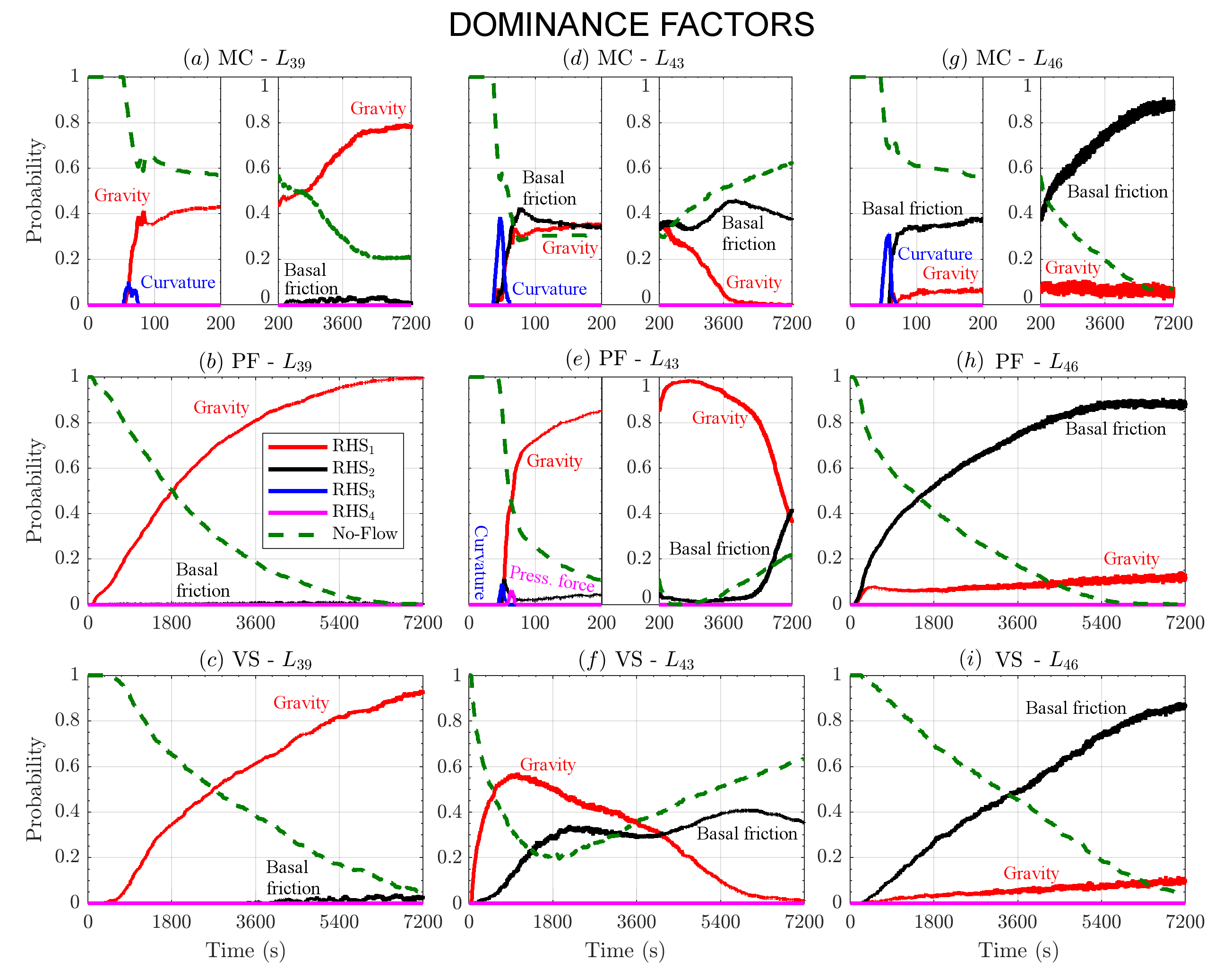}
        \caption{Dominance factors of the forces in three locations after 2 km of runout. Different models are plotted separately: (a,d,g) assume MC; (b,e,h) assume PF; (c,f,i) assume VS. Different colors correspond to different force terms. No-flow probability is displayed with a green dashed line.}
        \label{fig:Colima-Pr2}
\end{figure}
In summary, only the gravity or the basal friction are dominant with high probability. Some of the points have a deposit at the end with a high chance, some other not, depending on the slope. In general, in MC the no-flow probability tends to be larger than in the other models, because some flow samples stops earlier, or completely leaves the site. Again, curvature can have a small chance to be dominant in MC and PF, particularly when the speed is high. Point $L_{43}$ deserves a specific discussion. It is not proximal to the initiation, but the no-flow probability is increasing at the end, meaning that all the material tends to leave the site. Moreover, the dominating force can be the gravity or the basal friction depending on the time and the model. In MC and VS both the two forces have similar chances to be dominant for  most of the time of the simulation. In PF, only the gravitational force is dominant with a high chance. This is probably because point $L_{43}$ is situated downhill of a place where a significant amount of material stops.
\begin{figure}[H]
         \centering
        \includegraphics[width=0.95\textwidth]{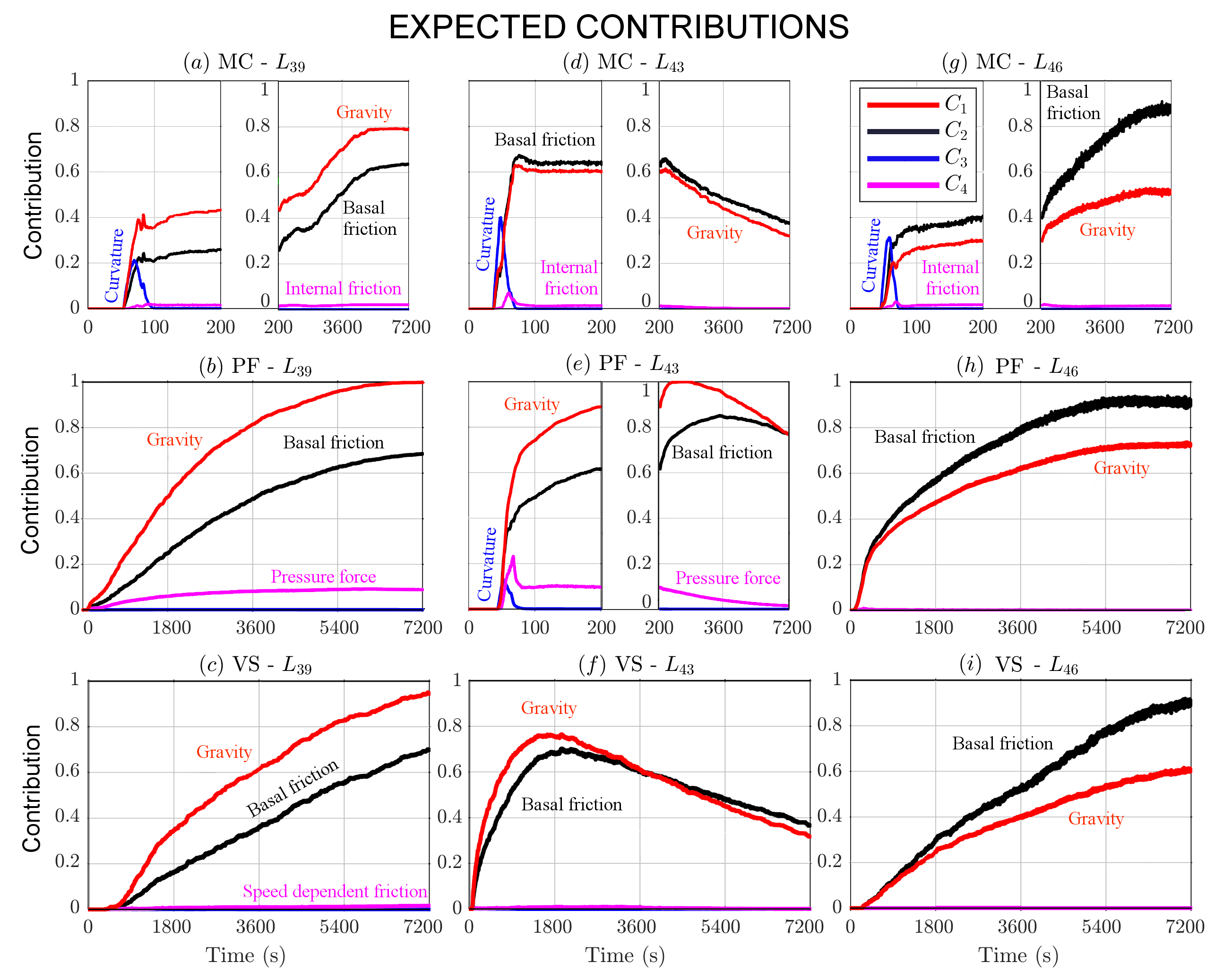}
        \caption{Expected contributions of the forces in three locations after 2 km of runout. Different models are plotted separately: (a,d,g) assume MC; (b,e,h) assume PF; (c,f,i) assume VS. Different colors correspond to different force terms.}
        \label{fig:Colima-Ci_2}
\end{figure}
Figure \ref{fig:Colima-Ci_2} shows the expected contributions in the distal points. In general, it is worth noting that the remarkable diversity in the dominance factors between the different locations can be the consequence of even a small imbalance between gravity and basal friction. All the plots are dominated by $C_1$ and $C_2$, and the remarkable differences observed in the dominance factors depend on which contribution is the greatest. In general these two contributions have similar profiles. Plots \ref{fig:Colima-Ci_2}a,b,c are related to point $L_{39}$ and $C_1>C_2$. In MC, also is $C_3>0$ for a short time. In MC and PF also $C_4>0$, but it is significantly lower than the previous contributions, almost negligible in MC. Plots \ref{fig:Colima-Ci_2}d,e,f concern point $L_{43}$. In MC $C_1<C_2$, in PF $C_1>C_2$, in VS they $C_1$ decreases and crosses $C_2$ at $\sim 3600 s$. The two contributions form a plateau in MC, in $[90, 200] s$. In MC and PF $C_3>0$ for a few seconds, and also $C_4>0$ with an initial spike at $\sim 60s$. In PF it shows a long lasting plateau, while becomes negligible in MC. Plots \ref{fig:Colima-Ci_2}g,h,i are related to point $L_{46}$. In all the models $C_1<C_2$, and these force contributions are monotone increasing. Only in MC $C_3>0$ shortly, and $C_4>0$, but almost negligible.

\subsection{Flow extent and spatial integrals}
Figure \ref{fig:Colima-spatial} shows the volumetric average of speed and Froude Number. It also shows the inundated area as a function of time. Like in Fig.\ref{fig:Ramp-spatial}, spatial averages and inundated area have smoother plots than local measurements, and most of the details observed in local measurements are not easy to discern. In plot \ref{fig:Colima-spatial}a the speed shows a bell-shaped profile in all the models, but whereas the all values were close in the inclined plane experiment, in this case the maximum speed is $\sim 60 m/s$ in MC, $\sim 50 m/s$ in PF, $\sim 20 m/s$ in VS, on average. Uncertainty is $\pm 18 m/s$ in MC, similar, but skewed, in VS, and $\pm 10 m/s$ in PF.
\begin{figure}[H]
        \centering
        \includegraphics[width=0.85\textwidth]{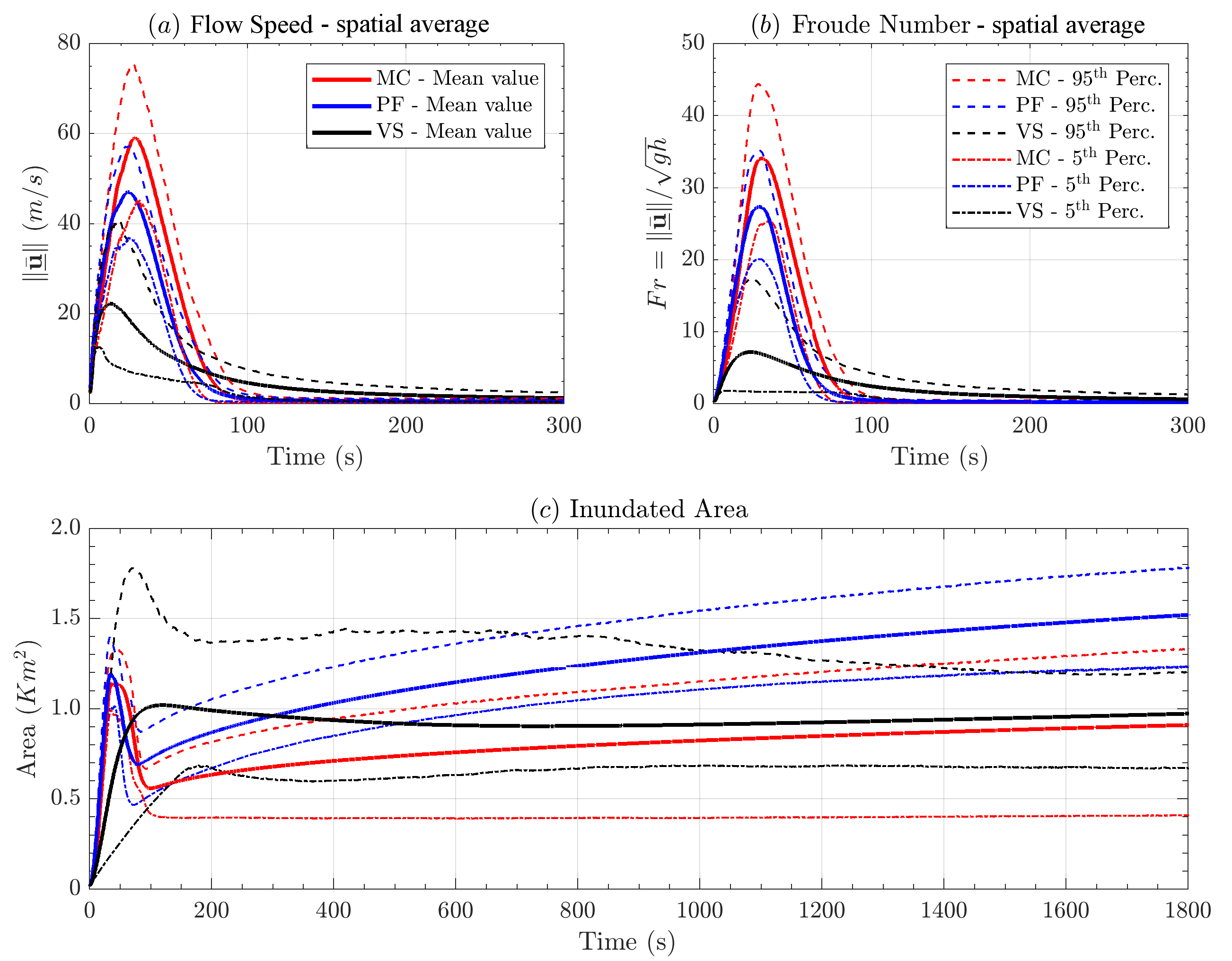}
        \caption{Comparison between spatial averages of $(a)$ flow speed, and $(b)$ Froude Number, in addition to the $(c)$ inundated area, as a function of time. Bold line is mean value, dashed/dotted lines are 5$^{\mathrm{th}}$ and 95$^{\mathrm{th}}$ percentile bounds. Different models are displayed with different colors.}
        \label{fig:Colima-spatial}
\end{figure}
In plot \ref{fig:Colima-spatial}b, the $Fr$ profile is very similar to the speed, but the difference between VS and the other models is accentuated. Maximum values are $\sim 50$ in MC, $\sim 38$ in PF, $\sim 5$ in VS, whereas uncertainty is $\pm 10$ in MC, $\pm 7$ in PF, and skewed $[-5, +10]$ in VS. In plot \ref{fig:Colima-spatial}c, inundated area has a first peak in MC and PF, both at $\sim 1.15 km^2$, followed by a decrease to $0.55 km^2$ and $0.7 km^2$, respectively, and then a slower increase up to a flat plateau at $0.9 km^2$ and $1.5 km^2$, respectively. Uncertainty is $\sim \pm 0.2 km^2$ in both MC and PF until $\sim 100 s$, and then it increases at $\pm 0.3 km^2$ and $[-0.5, +0.4] km^2$, respectively. In MC this increase in uncertainty is concentrated at $\sim 100 s$, while it is more gradual in PF. VS has a different profile. The initial peak is only significant in the 95$^{th}$ percentile values, and occurs later, at $\sim 100 s$. The peak is of $\sim 1 km^2$ on the average, but up to $\sim 1.8 km^2$ in the 95$^{th}$ percentile. The decrease after the peak is very slow and the average inundated area never goes below $0.85 km^2$, and eventually reaches back to $\sim 1 km^2$. Uncertainty is $[-0.3, +0.2] km^2$.

\subsection{Power integrals}
Figure \ref{fig:Colima-Power-spatial} shows the spatial sum of the powers. The estimates in this section assume $\rho = 1800 kg/m^3$ as a constant scaling factor. Corresponding plots of the force terms are included in Supporting Information S6. The scalar product of force with velocity imposes the bell-shaped profile already observed in Fig. \ref{fig:Ramp-Power-spatial}a. In general, gravity term is larger in VS, because a portion of the flow lingers on the higher slopes for a long time. Basal friction has a higher peak in PF compared to the other models, due to the interpolation of the two basal friction angles.
\begin{figure}[H]
        \centering
        \includegraphics[width=0.90\textwidth]{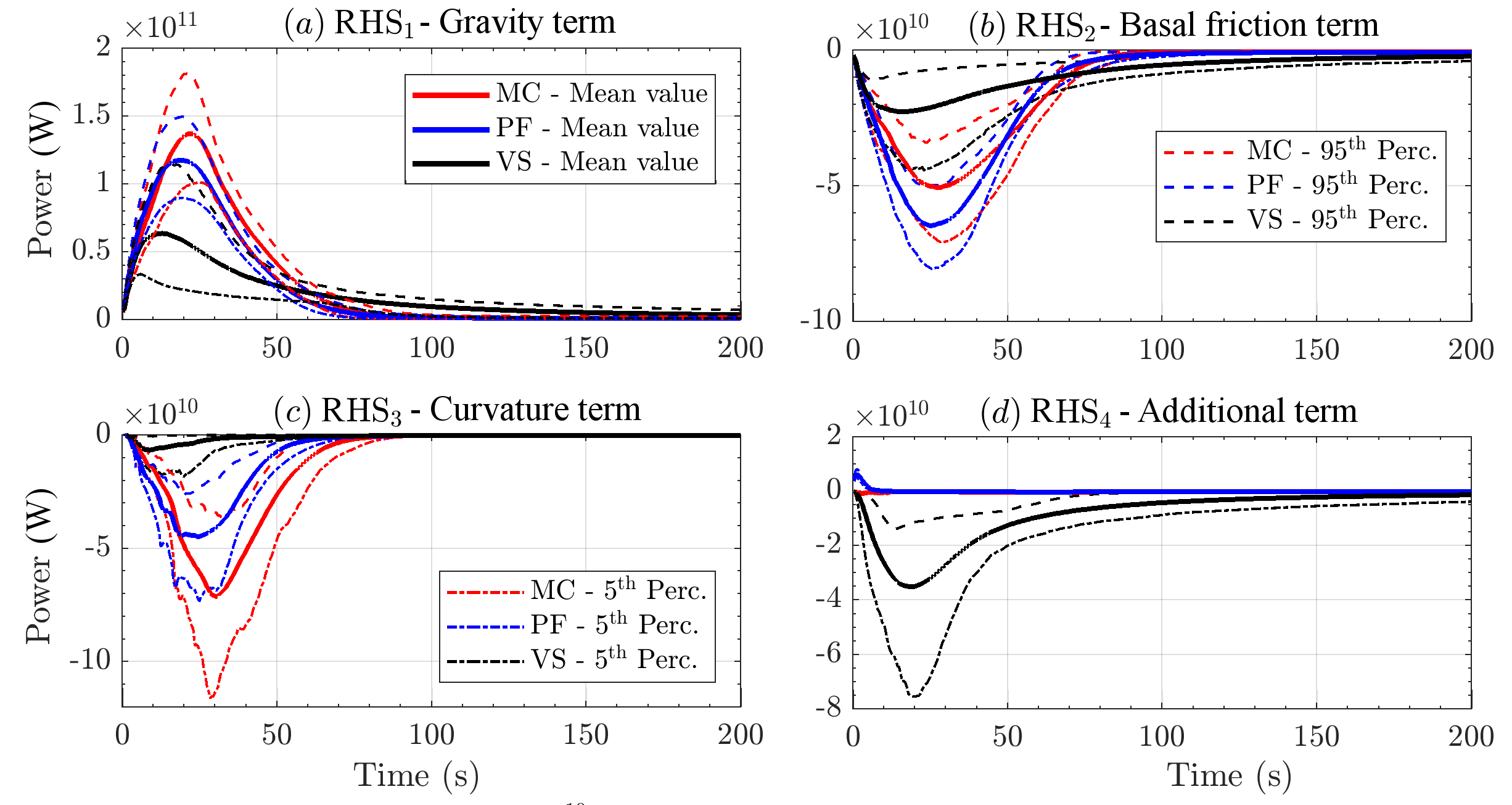}
        \caption{Spatial integrals of the powers. Bold line is mean value, dashed lines are 5$^{\mathrm{th}}$ and 95$^{\mathrm{th}}$ percentile bounds. Model comparison on the mean value is also displayed. Different models are displayed with different colors.}
        \label{fig:Colima-Power-spatial}
\end{figure}
In plot \ref{fig:Colima-Power-spatial}a the power of $\boldsymbol{RHS_1}$ starts from zero and rises up to $\sim 1.4e11 W$ in MC, $\sim 1.2e11 W$ in PF, $\sim 6.5e10 W$ in VS. Uncertainty is $\pm 4e10 W$ in MC, $\pm 3e10 W$ in PF, $[-4e10,+5e10] W$ in VS. The decrease of gravitational power is related to the slope reduction, and this decrease is more gradual in VS than in the other models. In plot \ref{fig:Colima-Power-spatial}b the power of  $\boldsymbol{RHS_2}$ is always negative and peaks to $\sim -6.5e10 W$ in MC, $\sim -5e10 W$ in PF, $\sim -2e10 W$ in VS. In VS this dissipative power is significantly more flat than in the other models. MC and PF show negligible powers after $\sim 100 s$, VS after $\sim 200 s$. Uncertainty is $\pm 2e10 W$ in MC, $\pm 1.5e10 W$ in PF, $[-2e10,+1e10] W$ in VS. In PF, the plot starts from stronger values than in the other models, but it is also the faster to wane. In plot \ref{fig:Colima-Power-spatial}c the power of $\boldsymbol{RHS_3}$ shows a negative peak at $\sim -7e10 W$ in MC, $\sim -4.5e10 W$ in PF, $\sim -5e9 W$ in VS. Uncertainty on the peak value is $[-4.5e10,+3.5e10] W$ in MC, $[-2.5e10,+2e10] W$ in PF, $[-1e10,+5e9] W$ in VS. The three models all show a bell-shaped profile, MC and PF waning to zero at $90 s$, VS at $\sim 30 s$. In plot \ref{fig:Colima-Power-spatial}d the power of $\boldsymbol{RHS_4}$ has a different meaning in the three models. In MC it is the internal friction term, and it only has almost negligible ripple visible in the first second. In PF it is a depth averaged pressure force linked to the thickness gradient, and has a very small effect limited to the first second of simulation, at $5e9 W$. It becomes null at $\sim 10 s$. In VS, instead, it is a speed dependent term, and has a very relevant effect. The plot shows a bell-shaped profile, with a peak of $\sim -3.5e10 W$, $[-2e10,+1e10] W$. After that, this dissipative power gradually decreases, and becomes negligible at $200 s$.

\subsection{Example of model performance calculation}
Finally, we give an example of model performance evaluation of the couple $\left(M, P_M\right)$ according to a specific observation. In past work \citep{Patra2005}, MC rheology was tuned to match deposits for known block and ash flows, but {\it a priori} predictive ability was limited by inability to tune without knowledge of flow character. The new procedure developed in this study enables an enhanced quantification of model performance, i.e. the similarity of the outputs and real data. We remark that the measured performance refers to the couple $\left(M, P_M\right)$, and that different parameter ranges can produce different performances \citep{Tierz2016}. This is in contrast with traditional performance analysis based on particular, albeit calibrated, simulations \citep{Charbonnier2012}.

Our example concerns the Volc{\'a}n de Colima case study, and in particular we compare the inundated region in our simulations to the deposit of a BAF occurred 16 April 1991 \citep{Saucedo2004, Rupp2004, Rupp2006}. The inundated region is defined as the points in which the maximum flow height $H$ is greater than $10 cm$. A similar procedure may be applied to any observed variable produced by the models, if specific data become available. Let $\mathcal M:\mathcal P(\mathbb R^2)\rightarrow[0,1]$ be a similarity index defined on the subsets of the real plane. An equivalent definition can be based on the pseudo-metric $1-\mathcal M$. For example, we define
$$\mathcal M_I:=\frac{\int_{\mathbb R^2} 1_{S \cap D}(\textbf{x}) d\textbf{x}}{\int_{\mathbb R^2} 1_D(\textbf{x})d\textbf{x}},\quad \mathcal M_U:=\frac{\int_{\mathbb R^2} 1_D(\textbf{x})d\textbf{x}}{\int_{\mathbb R^2} 1_{S \cup D}(\textbf{x}) d\textbf{x}}, \quad \mathcal J:=\mathcal M_I\cdot \mathcal M_U,$$
where $S\subset \mathbb R^2$ is the inundated region, and $D\subset \mathbb R^2$ is the recorded deposit. In particular, $\mathcal M_I$ is the area of the intersection of inundated region and deposit over the area of the deposit, $\mathcal M_U$ is the area of the deposit over the area of the union of inundated region and deposit, $\mathcal J$ is the product of the previous, also called Jaccard Index \citep{Jaccard1901}.

Figure \ref{fig:Colima-Hist} shows the probability distribution of the similarity indices, according to the uniform probability $P_M$ on the parameter ranges defined in this study. Different metrics can produce different performance estimates, for example MC inundates most of the deposit, but overestimates the inundated region, while VS relatively reduces the inundated region outside of the deposit boundary, but also leaves several not-inundated spots inside it.

\begin{figure}[H]
         \centering
        \includegraphics[width=0.95\textwidth]{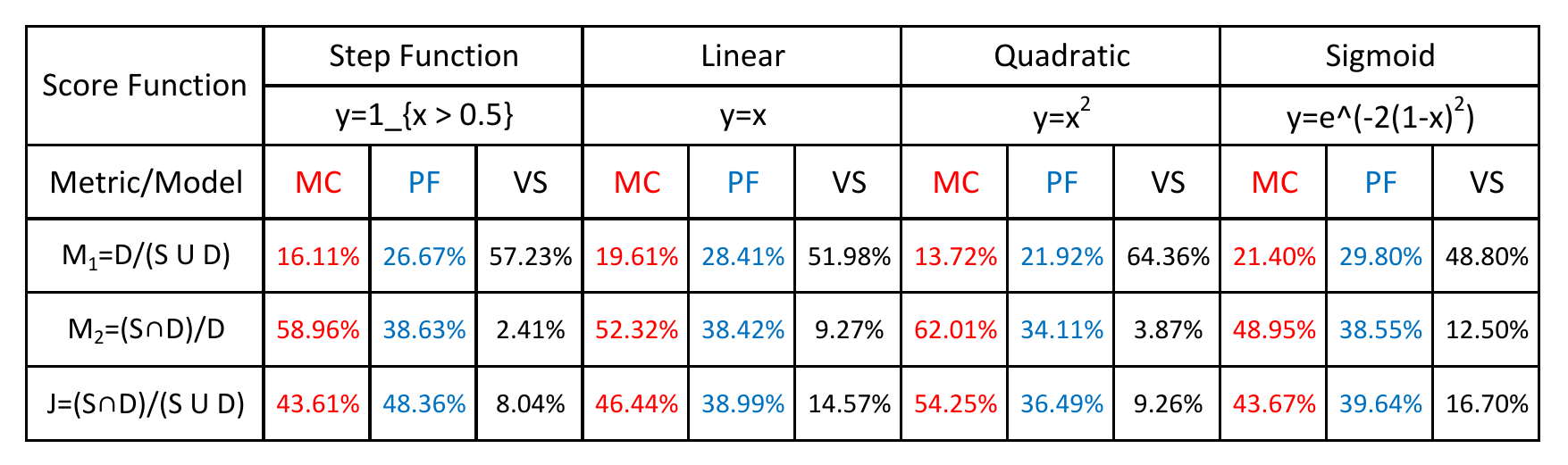}
        \caption{Performance scores as a function of model, performance metric and score function.}
\end{figure}

Let $g:[a,b]\rightarrow[0,1]$ be a score function defined over the percentile range of the similarity index. The global $5^{th}$ and $95^{th}$ percentile values $[a,b]$ are defined assuming to select the model randomly with equal chance, and are also shown in Fig.\ref{fig:Colima-Hist}a,b,c.

Then the performance score $G_g$ of model $\left(M, P_M\right)$ is defined as:
$$G_g\left(M, P_M\right)=\int_{[a,b]} g(x) df_M(x),$$
where $f_M$ is the pdf related to the model. Possible score functions include a step function at the global median, a linear or quadratic function, a sigmoid function. Table 1 shows alternative performance scores, according to changing similarity indices and score functions.

\begin{figure}[H]
         \centering
        \includegraphics[width=0.9\textwidth]{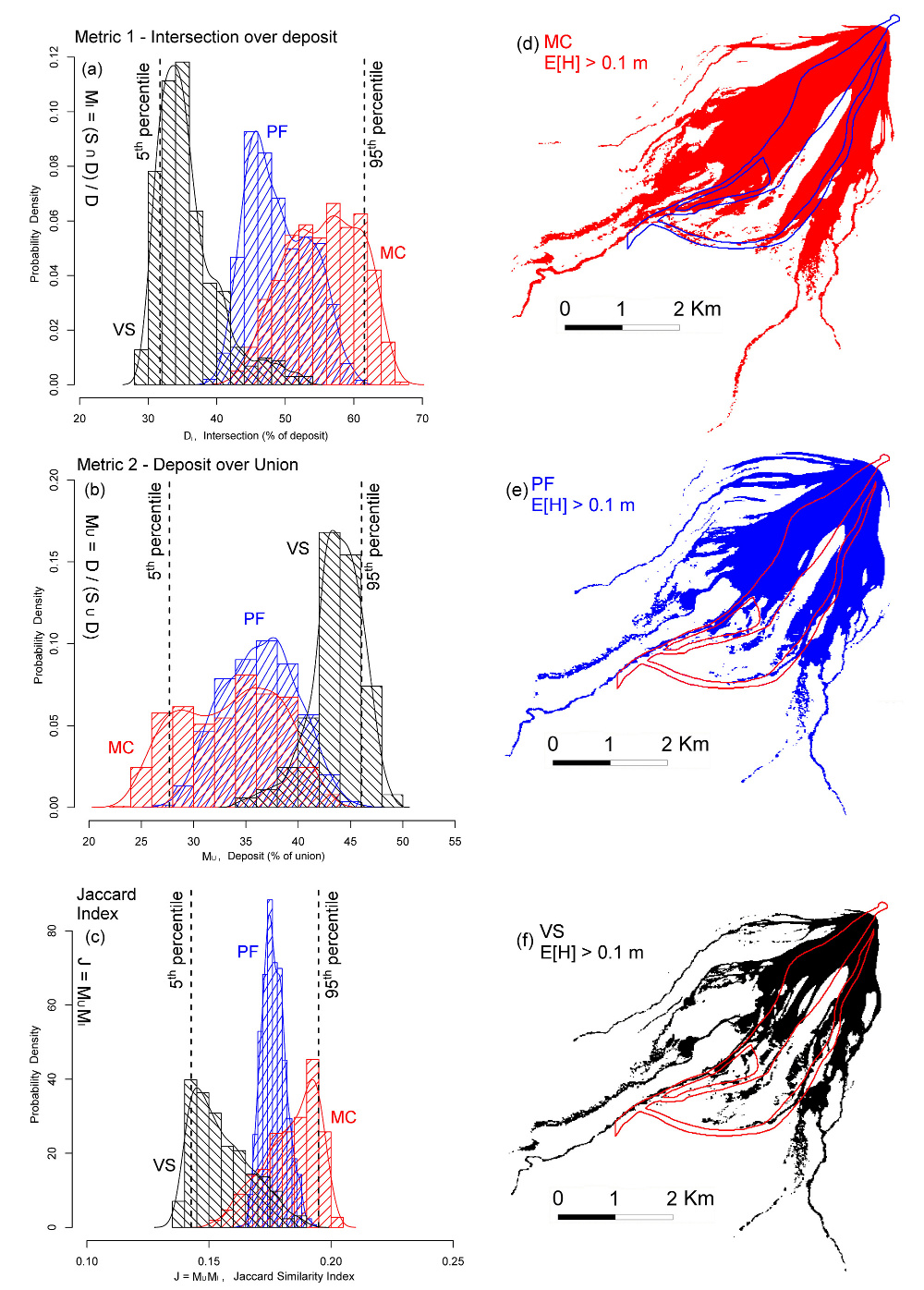}
        \caption{Volc{\'a}n de Colima. Pdf of the similarity index of inundated regions and a real BAF deposit. (a) is based on $\mathcal M_I$, (b) on $\mathcal M_U$, (c) on $\mathcal J$. The models are MC (red), PF (blue), VS (black). Data histograms are displayed in the background. Global $5^{th}$ and $95^{th}$ percentile values are indicated with dashed lines. Plots (d,e,f) display the average inundated region $\left\{\textbf{x} : E[H(\textbf{x}]>10 cm\right\}$. The boundary of the real deposit is marked with a colored line.}\label{fig:Colima-Hist}
\end{figure}

\section{Conclusions}
In this study, we have introduced a new statistically driven method for analyzing complex models and their constituents. We have used three different models arising from different assumptions about rheology in geophysical mass flows to illustrate the approach. The data shows unambiguously the performance of the models across a wide range of possible flow regimes and topographies. The analysis of contributing variables is particularly illustrative of the impact of modeling assumptions. Knowledge of which assumptions dominate, and by how much, allows us to construct efficient models for desired inputs. Such model composition is the subject of ongoing and future work, with the purpose of bypassing the search for a unique best model, and going beyond a simple mixture of alternative models.

In summary, our new method enabled us to break down the effects of the different physical assumptions in the dynamics, providing an improved understanding of what characterizes each model. The procedure was applied to two different case studies: a small scale inclined plane with a flat runway, and a large scale DEM on the SW slope of Volc\'{a}n de Colima (MX). In particular, we presented:
\begin{itemize}
  \item A short review of the assumptions characterizing three commonly used rheologies of Mohr-Coulomb, Poliquenne-Forterre, Voellmy-Salm. This included a qualitative list of such assumptions, and a breaking down of the different terms in the differential equations.
  \item A new statistical framework, processing the mean and the uncertainty range of either observable or contributing variables in the simulations. The new concepts of dominance factors and expected contributions enabled a simplified description of the local dynamics. These quantities were analyzed at selected sites, and spatial integrals were calculated, which illustrate the characteristics of the entire flow.
  \item The contribution  coefficients $C_i$ and dominance factors $P_i$  introduced here allow us to quantify and compare in a probabilistic framework the effect of modeling assumptions based on the full range of flows explored using statistically rigorous ensemble computations.
  \item A final discussion, explaining all the observed features in the results in light of the known physical assumptions of the models, and the evolving flow regime in space and time. This included an example of the model performance estimation method, which depends on the metric and the cost function adopted.
\end{itemize}

Our analysis uncovered the following main features of the different geophysical models used in the example analysis:
\begin{itemize}
  \item Compared to the standard MC model, the lack of internal friction in the PF model produces an accentuated lateral spread. The spread is increased by the uninhibited internal pressure force, which briefly pushes the flow ahead and laterally during the initial collapse. That force can also have some minor effects in the accumulation of the final deposit. The interpolation between the smaller bed friction angle $\phi_1$ and the larger value $\phi_2$ in the PF model suddenly stops the flow if it is thin compared to its speed. This mechanism suppresses large peaks in flow speed.
  \item In VS, the speed-dependent friction has a great effect in reducing lateral spread and producing channelized flow even where there are otherwise minor ridges and adverse slopes in the topography. The flow tends to be significantly slower and more stretched out in the downslope direction. The effects of different formulations of the curvature term have less impact than do the effects of lower basal friction and speed.
\end{itemize}

Furthermore, we can make the following statements about the technique in terms of its use on models in general:
\begin{itemize}
\item It gives information not only on which forces in the equations of motion are dominating the flow, but also shows where these forces are greatest and gives insight into why they locally peak and vary, and into the impacts of the dominating forces on the model flow outputs,
\item It provides a new, quantitative technique to evaluate the most important forces or phenomena acting in a particular model domain, which can supplement, provide insight and guidance into, and generate quantitative information for, the more typical methods used in force analysis of intuition and Similarity Theory.
\end{itemize}

Additional research concerning other case studies, and different parameter ranges, might reveal other flow regimes, and hence differences in the consequences of the modeling assumptions under new circumstances.

\paragraph{Acknowledgements}
We would like to acknowledge the support of NSF awards 1521855, 1621853, and 1339765.

\appendix
\section{Latin Hypercubes based on Orthogonal Arrays}\label{A-1}
The Latin Hypercube Sampling is a well established procedure for defining pseudo-random designs of samples in $\mathbb R^d$, with good properties with respect to the uniform probability distribution on an hypercube $[0,1]^d$ \citep{McKay1979,Owen1992b,Stein1987,Ranjan2014,Mingyao2016}. In particular, compared to a random sampling, a Latin Hypercube: (i) enhances the capability to fill the d-dimensional space with a finite number of points, (ii) in case $d>1$, avoids the overlapping of point locations in the one dimensional projections, (iii) reduces the dependence of the number of points necessary on the dimensionality $d$.


The procedure is simple: once the desired number of samples $N\in\mathbb N$ is selected, and $[0,1]$ is divided in $N$ equal bins, then each bin will contain one and only one projection of the samples over every coordinate. The Latin Hypercube Sampling definition is trivially generalized over $C=\prod^d_i [a_i, b_i]$, i.e. the cartesian product of $d$ arbitrary intervals. This has been applied in this study, to define a Latin Hypercube Sampling over the parameter domain of the flow models.

There are a large number of possible designs, corresponding the number of permutations of the bins in the d-projections, i.e. $d\cdot N!$. If the permutation is randomly selected there is a high possibility that the design will have good space filling properties. However, this is not assured, and clusters of points or regions of void space may be observed in $C$. For this reason, we based our design on the Orthogonal Arrays (OA) theory \citep{Owen1992a,Tang1993}.

\begin{definition}[Orthogonal arrays]
Let $S=\{1,\dots,s\}$, where $s\ge 2$. Let $Q\in S^{n\times m}$ be a matrix of such integer values. Then $Q$ is called an $OA(n,m,s,r)$ $\Longleftrightarrow$ each $n\times r$ submatrix of $Q$ contains all possible $1\times r$ row vectors with the same frequency $\lambda=n/s^r$, which is called the index of the array. In particular, $r$ is called the strength, $n$ the size, $(m\ge r)$ the constrains, and $s$ the levels of the array.
\end{definition}

OA are very useful for defining special Latin Hypercubes which are also forced to fill the space (or its r-dimensional subspaces, for a chosen $r<d$) in a more robust way, at the cost of potentially requiring a larger number of points than in a traditional Latin Hypercube sampling. In particular, inside each r-dimensional projection, the OA-based design fills the space like a regular grid at a coarse scale, but it is still an Latin Hypercube Sampling at a fine scale. A complete proof can be found in \citep{Tang1993} and it is a straightforward verification of the required properties.

Dealing with relatively small $d$, i.e. $d\in\{3,4\}$, we adopt a Latin Hypercube Sampling based on a $OA(s^d,d,s,d)$. We take $s=8$ for the 3-dimensional designs over the parameter space of Mohr-Coulomb and Voellmy-Salm models, i.e. $512$ points; we took $s=6$ for the 4-dimensional designs over the more complex parameter space of the Pouliquen-Forterre model, i.e. 1296 points.

\section{Conditional decomposition of expected contributions}\label{A-2}
Expected contributions are obtained after diving the force terms by the dominant variable $\Phi$, which is an unknown quantity depending on time, location, and input parameters. Thus we provide an additional result, further explaining the meaning of those contributions through the conditional expectation.

\begin{proposition}
Let $(F_i)_{i\in I}$ be random variables on $(\Omega, \mathcal F, P)$. For each $i$, let $C_i$ be the random contribution of $F_i$. Then we have the following expression:
$$E[C_i]=\sum_j p_j \mathbb E\left[\frac{F_i}{|F_j|}\ \Big{|}\ \Phi=|F_j|\right],$$
where $p_j:=P\left\{\Phi=|F_j|\right\}$.
\end{proposition}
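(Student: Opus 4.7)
The plan is to apply the law of total expectation, partitioning the sample space according to which force term attains the maximum modulus. The statement essentially unpacks the definition of $C_i$ by conditioning on the identity of the dominant variable.

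First, I would split the expectation according to whether $\Phi$ vanishes:
\begin{equation*}
\mathbb E[C_i] = \mathbb E[C_i\,\mathbf 1_{\{\Phi=0\}}] + \mathbb E[C_i\,\mathbf 1_{\{\Phi\neq 0\}}] = \mathbb E[C_i\,\mathbf 1_{\{\Phi\neq 0\}}],
\end{equation*}
since $C_i\equiv 0$ on $\{\Phi=0\}$ by definition. Next, on the event $\{\Phi\neq 0\}$ the dominant variable is realised by at least one index, so I would cover this event by the family $\{\Phi=|F_j|\}_{j\in I}$. Assuming that ties among the $|F_j|$ occur with probability zero (the generic situation when $P_M$ is absolutely continuous, which is the setting assumed in Stage~1 of the analysis), these events form a partition of $\{\Phi\neq 0\}$ up to a null set. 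On $\{\Phi=|F_j|\}$ one has $C_i = F_i/\Phi = F_i/|F_j|$.

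Combining these two observations I obtain
\begin{equation*}
\mathbb E[C_i] = \sum_{j\in I} \mathbb E\!\left[\frac{F_i}{|F_j|}\,\mathbf 1_{\{\Phi=|F_j|\}}\right]
= \sum_{j\in I} P(\Phi=|F_j|)\,\mathbb E\!\left[\frac{F_i}{|F_j|}\,\Big|\,\Phi=|F_j|\right],
\end{equation*}
where the last step is the standard rewriting $\mathbb E[X\mathbf 1_A] = P(A)\,\mathbb E[X\mid A]$ (valid whenever $P(A)>0$; terms with $p_j=0$ drop out). Recognising $p_j=P(\Phi=|F_j|)$ delivers the claimed identity.

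The main obstacle, and essentially the only non-routine point, is the treatment of ties: if two indices $j,k$ simultaneously achieve the maximum with positive probability, then the events $\{\Phi=|F_j|\}$ overlap and $\sum_j p_j$ can exceed one, so the partition argument fails. I would address this either by explicitly invoking a no-tie assumption (natural given the continuous input distributions sampled by Latin Hypercube in Stage~2), or by replacing the naive decomposition with a disambiguating rule such as defining $J=\min\{j:\Phi=|F_j|\}$ and partitioning by $\{J=j\}$; on each such event $C_i=F_i/|F_j|$ still holds, and the same law-of-total-expectation computation goes through with $p_j$ reinterpreted as $P(J=j)$, which coincides with $P(\Phi=|F_j|)$ in the tie-free case.
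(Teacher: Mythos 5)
Your proof is correct and takes essentially the same route as the paper's, which introduces an index variable $Z$ with $(Z=j)\Longleftrightarrow(\Phi=|F_j|)$ and applies the law of total (iterated) expectation over the identity of the dominant term. You are in fact somewhat more careful than the paper, which silently assumes both that ties among the $|F_j|$ occur with probability zero (needed for $Z$ to be well defined) and that the event $\{\Phi=0\}$ causes no trouble in writing $E[C_i]=E[F_i/\Phi]$.
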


\begin{proof}
Let $Z$ be a discrete random variable such that, for each $j\in\mathbb N$, $(Z=j) \Longleftrightarrow (\Phi=|F_j|)$. Then, by the rule of chain expectation:
$$E[C_i]=\mathbb E\left[\frac{F_i}{\Phi}\right]=\mathbb E\left[\mathbb E\left[\frac{F_i}{\Phi}\ \Big{|}\ Z=j\right]\right]=$$
$$=\mathbb E\left[\mathbb E\left[\frac{F_i}{|F_j|}\ \Big{|}\ Z=j\right]\right]=\sum_j P\left\{Z=j\right\} \mathbb E\left[\frac{F_i}{|F_j|}\ \Big{|}\ Z=j\right].$$
Moreover, by definition, $p_j=P\left\{Z=j\right\}$. This completes the proof.
\end{proof}

Consequently, we define the conditional contributions.

\begin{definition}[Conditional contributions]
Let $(F_i)_{i=1,\dots, k}$ be random variables on $(\Omega, \mathcal F, P)$. Then, $\forall (i,j)$, the conditional contribution $C_{i,j}$ is defined as:
$$C_{i,j}:=\mathbb E\left[\frac{F_i}{|F_j|}\ \Big{|}\ \Phi=|F_j|\right],$$
where $\Phi$ is the dominant variable.
\end{definition}

\bibliographystyle{apalike}
\bibliography{mybibfile}

\end{document}